\DeclareMathOperator{\Tr}{Tr}
\DeclareMathOperator{\sinc}{sinc}
\newtheorem{theorem}{Theorem}
\begin{document}

\title{Bloch-Messiah reduction for twin beams of light}

\author{D.~B.~Horoshko}\email{Dmitri.Horoshko@univ-lille.fr}
\affiliation{Univ. Lille, CNRS, UMR 8523 - PhLAM - Physique des Lasers Atomes et Mol\'ecules, F-59000 Lille, France}
\affiliation{B.~I.~Stepanov Institute of Physics, NASB, Nezavisimosti Ave.~68, Minsk 220072 Belarus}%

\author{L.~La Volpe}
\affiliation{Laboratoire Kastler Brossel, UPMC-Sorbonne Universit\'e, CNRS, ENS-PSL Research University,
Coll\`ege de France, 4 place Jussieu, 75252 Paris, France}

\author{F.~Arzani}
\affiliation{Universit\'e de Lorraine, CNRS, Inria, LORIA, F-54000 Nancy, France}
\affiliation{LIP6, CNRS, Sorbonne Universit\'e, 4 place Jussieu, 75005 Paris, France}

\author{N.~Treps}
\affiliation{Laboratoire Kastler Brossel, UPMC-Sorbonne Universit\'e, CNRS, ENS-PSL Research University,
Coll\`ege de France, 4 place Jussieu, 75252 Paris, France}

\author{C.~Fabre}
\affiliation{Laboratoire Kastler Brossel, UPMC-Sorbonne Universit\'e, CNRS, ENS-PSL Research University,
Coll\`ege de France, 4 place Jussieu, 75252 Paris, France}

\author{M.~I.~Kolobov}
\affiliation{Univ. Lille, CNRS, UMR 8523 - PhLAM - Physique des Lasers Atomes et Mol\'ecules, F-59000 Lille, France}

\date{\today}

\begin{abstract}
We study the Bloch-Messiah reduction of parametric downconversion of light in the pulsed regime with a nondegenerate phase matching providing generation of twin beams. We find that in this case every squeezing eigenvalue has multiplicity at least two. We discuss the problem of ambiguity in the definition of the squeezing eigenmodes in this case and develop two approaches to unique determination of the latter. First, we show that the modal functions of the squeezing eigenmodes can be tailored from the Schmidt modes of the signal and idler beams. Alternatively, they can be found as a solution of an eigenvalue problem for an associated Hermitian squeezing matrix. We illustrate the developed theory by an example of frequency non-degenerate collinear twin beams generated in beta barium borate crystal. On this example we demonstrate how the squeezing eigenmodes can be approximated analytically on the basis of the Mehler's formula, extended to complex kernels. We show how the multiplicity of the eigenvalues and the structure of the eigenmodes are changed when the phase matching approaches the degeneracy in frequency.
\end{abstract}

\maketitle

\section{Introduction}

In the process of parametric downconversion (PDC) one photon of pump wave is converted into a pair of signal and idler photons, which are almost simultaneous \cite{Zeldovich69,Burnham70}. A multimode analysis \cite{KlyshkoBook,Hong85} of this process shows that the correlation time between the two photons is determined by the inverse of the downconverted light bandwidth, which is typically in the sub-picosecond range. Thus, detection of the idler photon results in a localization of the signal one \cite{Zeldovich69}, which is known as ``photon heralding'' technique \cite{Hong86}, widely used today. Analysis of the joint quantum state of two photons shows that a discrete set of orthogonal modes, Schmidt modes, can be defined for each photon of the pair such that detection of a photon in an idler Schmidt mode projects the other photon onto the corresponding signal Schmidt mode \cite{Law00,Grice01}. In the high-gain regime, when the initial spontaneous photon pairs experience amplification by stimulated emission and many photons are generated in the principal modes, nondegenerate PDC results in generation of two beams of light, signal and idler, having equal numbers of photons in any time interval longer than the inverse bandwidth of one beam. These highly correlated beams are known as ``twin beams'' and were generated both in the cavity \cite{Heidmann87} and the single-pass \cite{Aytur90} configurations. Early multimode theory of PDC \cite{Kolobov89,Reid89} allowed one to calculate many important properties of the generated field for the case of monochromatic pump. Later a modal decomposition for pulsed high-gain PDC was introduced \cite{Bennink02}, which represented the output field as an assembly of independent squeezing eigenmodes, each being in a single-mode squeezed state. Essentially similar approach was formulated in the language of symplectic transformations \cite{Braunstein05,McKinstrie13,Cariolaro16a,Cariolaro16b}, and is known as Bloch-Messiah reduction in reference to a similar reduction in the elementary particle physics \cite{Bloch62,Balian65}. In the case of degenerate PDC, where the signal and the idler photons are indistinguishable, the Bloch-Messiah reduction has proven to be a powerful tool for determining the squeezing eigenmodes for single-pass optical parametric amplifiers (OPA) \cite{Wasilewski06,Lvovsky07} and multi-pass optical parametric oscillators \cite{Patera10,Patera12,PateraThesis,Jiang12}. Application of this formalism to a degenerate PDC with a monochromatic pump, undertaken recently by some of us \cite{Lipfert18}, resulted in a successful identification of bichromatic squeezing eigenmodes. For nondegenerate high-gain PDC it has been found \cite{Migdal10} that up to certain level of gain the Schmidt modes of photon pairs determine a modal decomposition for the signal and idler beams such that the corresponding modes are in a \emph{two-mode} squeezed state. This decomposition is obviously related to the Bloch-Messiah reduction but does not coincide with it, since in the latter, as introduced by Braunstein \cite{Braunstein05}, the field is represented as an assembly of \emph{single-mode} squeezed states.

The aim of the present article is to apply the Bloch-Messiah reduction to twin beams generated in a non-degenerate PDC with any pump, pulsed or continuous wave. We develop a general formalism applicable in all cases where signal and idler beams can be discriminated either by frequency, direction, or polarization. We find a fundamental result applicable to all these cases: every squeezing eigenvalue of twin beams has multiplicity at least two. This result follows directly from the symmetry of the interaction Hamiltonian in the case where the downconverted light is partitioned into a signal and an idler parts. As consequence, the squeezing eigenmodes are determined up to an orthogonal rotation in the subspace, corresponding to a given eigenvalue. Such a rotation leads to ambiguity in the definition of squeezing eigenmodes and often makes impossible obtaining reproducible results when the eigenmodes are found numerically by a singular value decomposition (SVD) of the Bogoliubov transformation matrices \cite{Braunstein05} or by a Takagi factorization of the transformation generator matrix \cite{Bennink02,Cariolaro16a,Cariolaro16b}. We propose two solutions for this problem. First, we show how the squeezing eigenmodes can be composed from the Schmidt modes of each beam. Second, we show that the Takagi factorization of the squeezing matrix can be easily found from the spectral decomposition of an associated Hermitian matrix. As consequence, we find that under rather simple experimental conditions the modal functions of the squeezing eigenmodes can be identified as transform-limited waveforms. These modal functions create a basis in the subspace, corresponding to a given eigenvalue, and are represented  (up to constant phase) by \emph{real} functions of their spatio-temporal arguments. Measurement of pulsed squeezing is realized by pulsed homodying \cite{Medeiros14}, which requires a precise shaping of the local oscillator pulse to fit the target eigenmode. Thus, finding a simple transform-limited basis in a continuous family of eigenfunctions means a significant simplification of the measurement process, especially in the full three-dimensional case, where the modal functions depend on two transversal coordinates and time. We illustrate our general results by an example where the modal functions depend on frequency only, postponing the treatment of the much more complicated three-dimensional case to a future publication.

The article is structured as follows. In Sec.~\ref{symplectic} we give a brief review of the complex symplectic formalism for description of Gaussian unitary transformations, encompassing all variants of PDC with undepleted pump. The Bloch-Messiah reduction has a simple mathematical form in terms of this formalism. This formalism simplifies the mathematics of subsequent sections. Besides, we plan to apply this formalism in further publications to the description of three-dimensional spatio-temporal modes of a noncollinear OPA, which explains its rather detailed character. In Sec.~\ref{eigenmodes} we review different ways of determining the squeezing eigenmodes and show their equivalence. We consider here the Magnus expansion of the field evolution operator. The complex symplectic formalism helps us to clarify the general structure of the lowest orders of the Magnus expansion and the role of the interaction picture reference frame. Section~\ref{twin} is devoted to generation of twin beams, which is a special case of PDC where the signal and the idler modes are distinct. We show that in this special case the Hamiltonian of the unitary transformation possesses a symmetry, leading to double multiplicity of the squeezing eigenvalues. We describe two procedures allowing one to avoid the ambiguity in the determination of the squeezing eigenmodes. The developed theory is illustrated in Sec.~\ref{example} by a concrete example of bright twin beams discriminated in frequency only with the conditions close to the experiment, which is underway in the Sorbonne University group. Section~\ref{conclusions} summarizes the results and makes an outlook for the future research.

\section{Symplectic  formalism for Gaussian unitary transformations}\label{symplectic}

\subsection{Field decomposition}

In quantum theory a multimode optical field is described by a Hermitian operator $A(\mathbf{r},t)$ of its vector potential in point $\mathbf{r}$ and time $t$. In the absence of sources this operator, in the Coulomb gauge, can be uniquely decomposed into a complete orthonormal set of modal functions $\{f_k(\mathbf{r},t)\}$ \cite{GrynbergBook,Fabre08}:
\begin{equation}\label{A}
 A(\mathbf{r},t) = \sum_k a_k f_k^*(\mathbf{r},t) + \sum_k a_k^\dagger f_k(\mathbf{r},t),
\end{equation}
so that $-\partial_t A(\mathbf{r},t) = E(\mathbf{r},t)$ is electric component and $\nabla\times A(\mathbf{r},t) = B(\mathbf{r},t)$ is the magnetic component of the optical field. The operators $a_k$ and $a_k^\dagger$ are known as photon annihilation and creation operators for the $k$th mode and obey the canonical commutation relations $[a_k, a_l^\dagger]=\delta_{kl}$. In the presence of sources these operators become slowly varying functions of time or one of spatial coordinates.

In the following we consider a finite number $n$ of optical modes. We introduce a column vector of field operators $\mathbf{a}=(a_1,...a_n,a_1^\dagger,...,a_n^\dagger)^T$, and a column vector of modal functions $\mathbf{f}(\mathbf{r},t)=(f_1(\mathbf{r},t),...,f_n(\mathbf{r},t),f_1^*(\mathbf{r},t),...f_n^*(\mathbf{r},t))^T$. Now Eq.~(\ref{A}) can be rewritten in a matrix form as  $A(\mathbf{r},t) = \mathbf{f}^\dagger(\mathbf{r},t)\mathbf{a}$, and the commutation relations as $\mathbf{a}\mathbf{a}^\dagger-\left(\mathbf{a}^{\dagger T}\mathbf{a}^T\right)^T=\mathbf{K}$, where
\begin{equation}\label{K}
 \mathbf{K} = \left(\begin{array}{cc} \mathbb{I} & 0 \\ 0 & -\mathbb{I} \end{array}\right)
\end{equation}
and $\mathbb{I}$ is the $n\times n$ unit matrix. Here and below we adopt the bold font for $2n\times 2n$ matrices and $2n\times 1$ column vectors, the uppercase letters being reserved for matrices and the lowercase ones for vectors. The ``blackboard bold'' font is used for $n\times n$ matrices, while $n\times 1$ column vectors are marked by a bar above a lowercase letter. We introduce here a ``complex symplectic matrix formalism'', where a Hermitian conjugation applied to a matrix of operators means a transposition of the matrix and a Hermitian conjugation of its elements. Transposition applied to such a matrix means a transposition of the matrix without any effect to the elements. This formalism differs from the traditional real symplectic formalism \cite{Simon94,ParisBook,Adesso14,Weedbrook12} in two aspects. First, our variables are non-Hermitian operators of photon creation and ahhihilation and not positions and momenta. Such an approach is mentionned in Refs.~\cite{Simon94,ParisBook,Adesso14}, though not used for practical calculations. Second, all relations are expressed in a compact matrix form, simplifying the calculation, while in the traditional approach many important relations are expressed via matrix elements.

\subsection{General Gaussian unitary transformation}\label{general}

Interaction between the modes in a nonlinear optical process results in a transformation of the field operators. The case of a linear in $\mathbf{a}$ transformation is called a Gaussian unitary transformation and has the form
\begin{equation}\label{S}
 \mathbf{a}' = \mathbf{S}\mathbf{a},
\end{equation}
where $\mathbf{S}$ is a complex matrix. The conservation of the commutation relations requires $\mathbf{S}\mathbf{K}\mathbf{S}^\dagger=\mathbf{K}$, i.e. that the matrix $\mathbf{S}$ is complex symplectic, and has the following structure \cite{Simon94,ParisBook,Adesso14}:
\begin{equation}\label{Sstructure}
 \mathbf{S}= \left(\begin{array}{cc} \mathbb{S}_0 & \mathbb{S}_I \\ \mathbb{S}_I^* & \mathbb{S}_0^* \end{array}\right),
\end{equation}
where the complex $n\times n$ matrices $\mathbb{S}_0$ and $\mathbb{S}_I$, known as Bogoliubov transfromation matrices, satisfy the relation $\mathbb{S}_0\mathbb{S}_0^\dagger -\mathbb{S}_I\mathbb{S}_I^\dagger =\mathbb{I}$ and the matrix $\mathbb{S}_0\mathbb{S}_I^T$ is complex symmetric.

For each symplectic matrix $\mathbf{S}$ there exists a unitary operator $\mathcal{U}$, defined up to phase, producing the corresponding transformation of the field operators \cite{Simon94}
\begin{equation}\label{U}
 \mathbf{a}' = \mathcal{U}^\dagger\mathbf{a}\,\mathcal{U}.
\end{equation}
For two subsequent transformations we have $\mathbf{S}=\mathbf{S}_2\mathbf{S}_1$ and  $\mathcal{U}=\mathcal{U}_2\,\mathcal{U}_1$.

Any unitary operator can be written as $\mathcal{U} = \exp(-i\mathcal{H})$, where the Hermitian operator $\mathcal{H}$ is called transformation generator. For a Gaussian unitary transformation its generator is a polynomial of $\mathbf{a}$ of the second order at maximum. For the sake of simplicity we omit the linear part of this dependence, which is rather trivial, and consider the generators being quadratic forms in the field operators: $\mathcal{H} = \frac12\mathbf{a}^\dagger \mathbf{H} \mathbf{a}$, where the Hermitian matrix $\mathbf{H}$, which we call ``transformation generator matrix'', has the structure
\begin{equation}\label{H}
 \mathbf{H} = \left(\begin{array}{cc} \mathbb{H}_0 & \mathbb{H}_I \\ \mathbb{H}_I^* & \mathbb{H}_0^* \end{array}\right)
\end{equation}
with $\mathbb{H}_0$ Hermitian and $\mathbb{H}_I$ complex symmetric.

The symplectic matrix, Eq.~(\ref{Sstructure}), can also be expressed via its generator \cite{Adesso14}, which is simply related to the matrix of the quadratic form: $\mathbf{S} = \exp(-i\mathbf{K}\mathbf{H})$. Thus, any symplectic matrix can be written as
\begin{equation}\label{SviaH}
 \mathbf{S} = \exp\left(\begin{array}{cc} -i\mathbb{H}_0 & -i\mathbb{H}_I \\ i\mathbb{H}_I^* & i\mathbb{H}_0^* \end{array}\right).
\end{equation}

The dimension of the group, created by symplectic matrices of the form Eq.~(\ref{Sstructure}) or Eq.~(\ref{SviaH}) is $2n^2+n$. This is also the number of linearly independent generators of the form $\mathbf{K}\mathbf{H}$. We note that any pair of $n\times n$ matrices, a Hermitian $\mathbb{H}_0$  and a complex symmetric $\mathbb{H}_I$, define an $n$-dimensional Gaussian unitary transformation. Thus, parametrisation of a Gaussian unitary transformation in terms of these matrices is much simpler than in terms of the Bogoliubov transformation matrices $\mathbb{S}_0$ and $\mathbb{S}_I$, which should satisfy additional equations involving both matrices, as indicated above.

\subsection{Passive Gaussian unitary transformation}\label{passive}

An important subclass of Gaussian unitary transformations contains the transformations, preserving the total number of photons $\mathbf{a}^\dagger \mathbf{a}$. Such transformations correspond physically to mixing different modes on a multiport interferometer and are known as passive Gaussian unitary transformations \cite{Simon94,ParisBook,Weedbrook12,Adesso14}. The passive transformations necessarily have $\mathbb{H}_I=\mathbb{S}_I=0$, the matrix $\mathbb{S}_0$ becoming unitary: $\mathbb{S}_0=\exp(-i\mathbb{H}_0)$. The symplectic matrix
\begin{equation}\label{Spassive}
 \mathbf{S} = \left(\begin{array}{cc} e^{-i\mathbb{H}_0} & 0 \\ 0 & e^{i\mathbb{H}_0^*} \end{array}\right)
\end{equation}
is also unitary in this case. The corresponding transformation generator can be obtained by writing
\begin{eqnarray}\label{Hpassive}
\mathcal{H} &=& \frac12\mathbf{a}^\dagger \mathbf{H} \mathbf{a} = \frac12\left(\begin{array}{cc} \bar a^\dagger & \bar a^{T} \end{array}\right)\left(\begin{array}{cc} \mathbb{H}_0 & 0 \\ 0 & \mathbb{H}_0^* \end{array}\right)\left(\begin{array}{c} \bar a \\ \bar a^{\dagger T} \end{array}\right)\\\nonumber
&=& \frac12\left(\bar a^\dagger \mathbb{H}_0\bar a + \bar a^T \mathbb{H}_0^*\bar a^{\dagger T}\right) = \bar a^\dagger \mathbb{H}_0\bar a + \frac12\Tr \mathbb{H}_0,
\end{eqnarray}
where $\bar a = (a_1,...a_n)^T$ is the column vector of the annihilation operators of the modes.
The corresponding unitary operator in the state space is (up to phase)
\begin{equation}\label{Upassive}
 \mathcal{U} = e^{-i\bar a^\dagger \mathbb{H}_0\bar a}.
\end{equation}

The dimension of this (compact) subgroup of transformations is $n^2$.

\subsection{Mode-wise squeezing}

Another important subclass of Gaussian unitary transformations consists of single-mode squeezing of all modes with the squeezing parameters $\{r_1,...,r_n\}$, some of which may be zero \cite{Simon94,ParisBook,Weedbrook12,Adesso14}. For definiteness we accept that all squeezing parameters are non-negative and sorted in the decreasing order. They can be written in a form of a positive diagonal matrix $\mathbb{R}=\mathrm{diag}\{r_1,...,r_n\}$. We define the mode-wise squeezing transformation as a Gaussian unitary transformation with $\mathbb{H}_0=0$ and $\mathbb{H}_I=i\mathbb{R}$. The symplectic matrix for this class of transformations is
\begin{equation}\label{Smodewise}
 \mathbf{S} = \exp\left(\begin{array}{cc} 0 & \mathbb{R} \\ \mathbb{R} & 0 \end{array}\right)
 = \left(\begin{array}{cc} \cosh(\mathbb{R}) & \sinh(\mathbb{R}) \\ \sinh(\mathbb{R}) & \cosh(\mathbb{R}) \end{array}\right).
\end{equation}

The corresponding Hamiltonian can be obtained by writing %
\begin{eqnarray}\label{Hmw}
\mathcal{H} &=& \frac12\mathbf{a}^\dagger \mathbf{H} \mathbf{a} = \frac12\left(\begin{array}{cc} \bar a^\dagger & \bar a^{T} \end{array}\right)\left(\begin{array}{cc} 0 & i\mathbb{R} \\ -i\mathbb{R} & 0 \end{array}\right) \left(\begin{array}{c} \bar a \\ \bar a^{\dagger T} \end{array}\right)\\\nonumber
&=& \frac12\left(i\bar a^\dagger \mathbb{R}\bar a^{\dagger T} - i\bar a^T \mathbb{R}\bar a\right).
\end{eqnarray}
The corresponding unitary operator in the state space is
\begin{equation}\label{Umw}
 \mathcal{U} =  e^{\frac12\left(\bar a^\dagger \mathbb{R}\bar a^{\dagger T} - \bar a^T \mathbb{R}\bar a\right)}.
\end{equation}

The dimension of this (noncompact) subgroup of transformations is $n$.

\subsection{Bloch-Messiah reduction}

The central mathematical procedure for determining the modes of squeezing of a multimode optical field is the decomposition of an arbitrary Gaussian unitary transformation into two passive transformations and one mode-wise squeezing transformation:
\begin{equation}\label{BMR}
 \mathbf{S} = \left(\begin{array}{cc} \mathbb{V} & 0 \\ 0 & \mathbb{V}^* \end{array}\right) \exp\left(\begin{array}{cc} 0 & \mathbb{R} \\ \mathbb{R} & 0 \end{array}\right) \left(\begin{array}{cc} \mathbb{Q} & 0 \\ 0 & \mathbb{Q}^* \end{array}\right)^\dagger,
\end{equation}
where the matrices $\mathbb{V}$ and $\mathbb{Q}$ are unitary and the matrix $\mathbb{R}$ is positive diagonal. This decomposition is known as Bloch-Messiah reduction. It was introduced by Bloch and Messiah for fermions \cite{Bloch62} and later generalized to bosons \cite{Balian65}. The physical meaning of this procedure is the possibility of realizing any Gaussian unitary transformation by means of two multiport interferometers and a number of single-mode squeezers \cite{Braunstein05}. It should be noted that in practice a passive Gaussian transformation can be realized not by a multiport interferometer, but by a change of modal basis, corresponding to a proper shaping of the local oscillator used in the homodyne measurement of the field.

Decomposition defined by Eq.~(\ref{BMR}) can be written in the following form, used by Braunstein \cite{Braunstein05}:
\begin{equation}\label{BMRAB}
 \mathbb{S}_0 = \mathbb{V}\cosh(\mathbb{R})\mathbb{Q}^\dagger, \,\, \mathbb{S}_I = \mathbb{V}\sinh(\mathbb{R})\mathbb{Q}^T,
\end{equation}
which means a simultaneous SVD of two Bogoliubov transformation matrices.

The corresponding expression for the unitary operator in the state space is
\begin{equation}\label{BMRU}
 \mathcal{U} = e^{-i\bar a^\dagger \mathbb{H}_{V}\bar a} e^{\frac12\left(\bar a^\dagger \mathbb{R}\bar a^{\dagger T} - \bar a^T \mathbb{R}\bar a\right)} e^{i\bar a^\dagger \mathbb{H}_{Q}\bar a},
\end{equation}
where the Hermitian matrices $\mathbb{H}_V$ and $\mathbb{H}_Q$ are the generators of the corresponding unitary matrices in Eq.~(\ref{BMR}): $\mathbb{V}=e^{-i\mathbb{H}_V}$, $\mathbb{Q}=e^{-i\mathbb{H}_Q}$.

\subsection{Gaussian states}\label{Gaussian}

A fundamental property of a Gaussian transformation is that it transforms a Gaussian state into another Gaussian one. Any $n$-mode state with a density operator $\rho$ is characterized by its Wigner function
\begin{equation}
W(\boldsymbol\alpha)=\frac1{\pi^{2n}} \int d^2\bar\lambda\Tr\left\{\rho e^{(\bar a^\dagger-\bar\alpha^\dagger)\bar\lambda-\bar\lambda^\dagger(\bar a-\bar\alpha)}\right\},
\end{equation}
where $\bar\alpha = (\alpha_1,...,\alpha_n)^T$ and $\bar\lambda = (\lambda_1,...,\lambda_n)^T$ are two column-vectors of complex variables and $\boldsymbol\alpha=(\bar\alpha^T, \bar\alpha^\dagger)^T$.

A state is called Gaussian if its Wigner function is a multivariate Gaussian distribution \cite{Simon94,ParisBook,Adesso14}:
\begin{equation}
W(\boldsymbol{\alpha})=\frac1{(2\pi)^{n}\sqrt{\det\boldsymbol{\Sigma}}} e^{-\frac12(\boldsymbol\alpha^\dagger-\boldsymbol\alpha^\dagger_0) \boldsymbol\Sigma^{-1} (\boldsymbol\alpha-\boldsymbol\alpha_0)},
\end{equation}
where $\boldsymbol\alpha_0 = \langle\mathbf{a}\rangle$ is the column vector of mean field and
\begin{equation}
\boldsymbol\Sigma = \frac{
\langle(\mathbf{a}-\boldsymbol\alpha_0) (\mathbf{a}^\dagger-\boldsymbol\alpha_0^\dagger)\rangle}2 + \frac{\langle\left[(\mathbf{a}^\dagger-\boldsymbol\alpha_0^\dagger)^T (\mathbf{a}-\boldsymbol\alpha_0)^T\right]^T \rangle}2
\end{equation}
is the $2n\times 2n$ complex covariance matrix.

If a Gaussian state with mean $\boldsymbol\alpha_0$ and covariance matrix $\boldsymbol\Sigma$ undergoes a Gaussian transformation with matrix $\mathbf{S}$, then the state at the output is a Gaussian state with the mean $\mathbf{S}\boldsymbol\alpha_0$ and the covariance matrix $\boldsymbol\Sigma' = \mathbf{S}\boldsymbol\Sigma\mathbf{S}^\dagger$ \cite{Simon94}.

The complex covariance matrix $\boldsymbol\Sigma$ has the following structure
\begin{equation}\label{Sigma}
\boldsymbol\Sigma = \left(\begin{array}{cc} \Sigma_0   & \Sigma_I \\
                                            \Sigma_I^* & \Sigma_0^* \end{array}\right),
\end{equation}
where the Hermitian matrix
\begin{equation}\label{Sigma0}
\Sigma_0= \frac12\left\langle(\bar{a}-\bar\alpha_0)(\bar{a}^\dagger-\bar\alpha_0^\dagger) +\left[(\bar{a}^\dagger-\bar\alpha_0^\dagger)^T(\bar{a}-\bar\alpha_0)^T\right]^T\right\rangle
\end{equation}
is the phase-insensitive covariance (also known as coherency matrix), while the complex symmetric matrix
\begin{equation}\label{SigmaI}
\Sigma_I= \left\langle(\bar{a}-\bar\alpha_0)(\bar{a}-\bar\alpha_0)^T\right\rangle
\end{equation}
is the phase-sensitive covariance of the field (also known as anomalous correlator \cite{Kilin89}).

In the next section we will see how the developed formalism leads to a natural determination of squeezing eigenmodes in the process of PDC.

\section{Squeezing eigenmodes}\label{eigenmodes}

\subsection{Determining the squeezing eigenmodes by Bloch-Messiah reduction}\label{definition}

In many nonlinear optical experiments, such as PDC and four-wave mixing (FWM), a squeezed light is generated by amplification of vacuum fluctuations. The formalism of Bloch-Messiah reduction, when applied to the vacuum field at the input, gives a possibility to define a set of modes at the output, such that each mode of the field is in a squeezed vacuum state with the squeezed quadrature along the same direction in the phase space, or vacuum. Indeed, the field transformation Eq.~(\ref{S}) can be written with the help of Eq.~(\ref{BMR}) as
\begin{equation}
\mathbf{V}^\dagger\mathbf{a}' = \left(\begin{array}{cc} \cosh(\mathbb{R}) & \sinh(\mathbb{R}) \\ \sinh(\mathbb{R}) & \cosh(\mathbb{R}) \end{array}\right) \mathbf{a}_{vac},
\end{equation}
where $\mathbf{a}_{vac} = \mathbf{Q}^\dagger\mathbf{a}$ and the unitary matrices $\mathbf{V}$ and $\mathbf{Q}^\dagger$ are the leftmost and the rightmost ones in the right hand side of Eq.~(\ref{BMR}), defined as
\begin{equation}\label{VQ}
 \mathbf{V} = \left(\begin{array}{cc} \mathbb{V} & 0 \\ 0 & \mathbb{V}^* \end{array}\right),\,\,\,
 \mathbf{Q} = \left(\begin{array}{cc} \mathbb{Q} & 0 \\ 0 & \mathbb{Q}^* \end{array}\right).
\end{equation}

The modes characterized by operators $\mathbf{b} = \mathbf{V}^\dagger\mathbf{a}'$ with the modal functions $\mathbf{g}(\mathbf{r},t) = \mathbf{V}^\dagger\mathbf{f}(\mathbf{r},t)$ are generally known as ``modes of squeezing'' or ``squeezed modes''. We shall call them ``squeezing eigenmodes'' and the corresponding diagonal values of the matrix $\mathbb{R}$ ``squeezing eigenvalues''. These modes are important for considering encoding quantum information into continuous variables of the optical field. For example, in  homodyne detection, for observing the maximal squeezing, the modal function of the local oscillator should match the mode with the maximal squeezing eigenvalue $r_1$.

\subsection{Determining the squeezing eigenmodes by squeezing matrix}

An alternative way of determining the squeezing eigenmodes can be obtained from Eq.~(\ref{BMRU}), which we rewrite as \cite{Ma90}
\begin{equation}\label{BMRU2}
 \mathcal{U} = e^{\frac12\left(\bar a^\dagger \Gamma\bar a^{\dagger T} - \bar a^T \Gamma^*\bar a\right)} e^{-i\bar a^\dagger \mathbb{H}_{V}\bar a} e^{i\bar a^\dagger \mathbb{H}_{Q}\bar a},
\end{equation}
where $\Gamma = \mathbb{V}\mathbb{R}\mathbb{V}^T$ is a compex symmetric matrix, called ``the squeezing matrix'' and we have used the identity
\begin{equation}
e^{-i\bar a^\dagger \mathbb{H}_{V}\bar a} \bar a e^{i\bar a^\dagger \mathbb{H}_{V}\bar a} = e^{i\mathbb{H}_V}\bar a = \mathbb{V}^\dagger\bar a,
\end{equation}
following from the properties of a passive Gaussian transformation, described in Sec.~\ref{passive}. When the operator $\mathcal{U}$ acts on a vacuum field, as in the case of unseeded PDC, two rightmost factors in the right hand side of Eq.~(\ref{BMRU2}) have no effect, and the resulting state is determined only by the third factor, dependent on $\Gamma$.

Bennink and Boyd \cite{Bennink02} considered the Takagi factorization of the squeezing matrix $\Gamma = \mathbb{VRV}^T$ and called the modes defined by the columns of $\mathbb{V}$ the ``eigenmodes of the squeezing''. We see from the above that the Takagi factorization of $\Gamma$ results in the same matrix $\mathbb{V}$ as the Bloch-Messiah reduction, and therefore this way of determining the squeezing eigenmodes is equivalent to that of Sec.~\ref{definition}.

\subsection{Determining the squeezing eigenmodes by covariance matrix}

One more alternative way for determining the squeezing eigenmodes is based on the diagonalization of the covariance matrix. The state at the input of the nonlinear crystal in the case of unseeded PDC is vacuum, which is a Gaussian state with zero mean and the covariance matrix $\frac12\mathbf{I}$. In accord with Sec.~\ref{Gaussian} the output field has zero mean and the covariance matrix
\begin{equation}
\boldsymbol\Sigma' = \frac12\mathbf{S}\mathbf{S}^\dagger = \frac12\mathbf{V}e^{2\mathbf{R}}\mathbf{V}^\dagger,
\end{equation}
where
\begin{equation}
\mathbf{R} = \left(\begin{array}{cc} 0 & \mathbb{R} \\ \mathbb{R} & 0 \end{array}\right).
\end{equation}

The blocks of the covariance matrix, defined by Eq.~(\ref{Sigma}) are $\Sigma'_0=\frac12\mathbb{V}\cosh(2\mathbb{R})\mathbb{V}^\dagger$ and $\Sigma'_I=\frac12\mathbb{V}\sinh(2\mathbb{R})\mathbb{V}^T$. Note, that the matrix $\mathbb{V}$ obtained from the Takagi factorization of  $\Sigma'_I$ fits also the spectral decomposition of $\Sigma'_0$, but the inverse is not true in general.

We see thus that the squeezing eigenmodes can be obtained from the Takagi factorization of the phase-sensitive covariance $\Sigma'_I$ of the output field, similar to the approach of Shapiro and Shakeel \cite{Shapiro97}. In the latter approach the squeezing eigenmodes are obtained from a diagonalization of a real symmetric covariance matrix, including both the phase-insensitive and the phase-sensitive covariances. As we have shown above the same result can be achieved by a diagonalization of the phase-sensitive covariance alone. However this matrix is complex symmetric, not real symmetric as in Ref.~\cite{Shapiro97}.

As we have seen from the two last sections, the matrix of squeezing eigenmodes $\mathbb{V}$ can be obtained by the Takagi factorization of either the squeezing matrix $\Gamma$ determined by the physical model of PDC, or the phase-sensitive covariance matrix $\Sigma'$, directly measurable by the homodyne technique. Takagi factorization can be rather easily calculated numerically when all the squeezing eigenvalues are different. This case is typical for degenerate OPA \cite{Wasilewski06,Lvovsky07} and optical parametric oscillators \cite{Patera10,Jiang12,Arzani18}. As we will see in Sec.~\ref{twin}, in the case of twin beams each squeezing eigenvalue has a multiplicity of at least two and the squeezing eigenmodes are defined with some degree of freedom. Takagi factorization in this case requires computing a balancing matrix and its square root \cite{Cariolaro16b}, which introduces a higher level of complexity, especially in a multi-dimensional case. We show below how this complexity may be reduced in many practically important cases.

\subsection{Frequency eigenmodes}

To show how the formalism developed above is related to generation of squeezed light in PDC, we limit our consideration to the case, where the generated photons differ in one variable only, the frequency, i.e. to the case of type-I collinear PDC. However, all obtained results will be valid when the photons differ in any combination of frequency, direction and polarization with a proper relabeling of the modes.

We consider PDC in a nonlinear $\chi^{(2)}$ crystal with a pulsed plane-wave pump of central frequency $\omega_p$. We assume that the pump wave is strong enough and is undepleted. A coordinate system is chosen with the $z$-axis in the direction of the pump wave propagation and with the origin at the input edge of the crystal, see Fig.~\ref{fig:pdc}. The pump is considered as a classical wave with the (time-dependent) amplitude $E_p$, the wave vector $k_p$, and the frequency $\omega_p$. The downconverted wave may have a broad spectrum of frequencies $\omega=\omega_0+\Omega$ around the central frequency $\omega_0=\omega_p/2$, with the corresponding wave vector $k_0$.

\begin{figure}[!ht]
\centering
\includegraphics[width=0.95\linewidth]{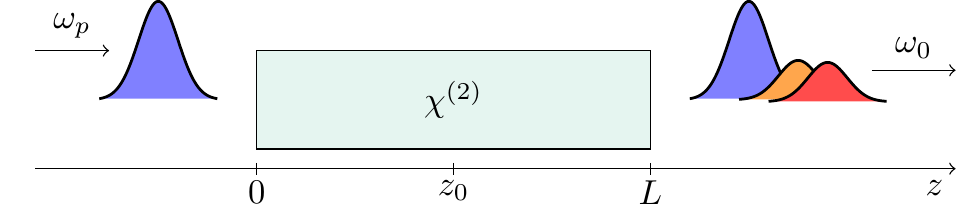}
\caption{Schematic representation of a collinear type-I PDC in the pulsed regime. Pump pulse with the central frequency $\omega_p$ passes through a $\chi^{(2)}$ nonlinear crystal. For $z>0$ this pulse is accompanied by the signal and the idler pulses at frequencies close to $\omega_0=\omega_p/2$. The point $z_0$ is the origin of the ``interaction picture reference frame'' and can be chosen anywhere inside the crystal. Relative delay between the pulses at the output is caused by the crystal positive dispersion: waves with a higher frequency travel at a lower group velocity. }
\label{fig:pdc}
\end{figure}
The down-converted wave is described by the positive-frequency operator $E^{(+)}(t,z)$ normalized to photon-flux units, which can be decomposed into Fourier components as
\begin{align}
E^{(+)}(t,z) = \frac1{2\pi}\int a(\Omega,z)e^{-i\left(\omega_0+\Omega\right)t}d\Omega,
\end{align}
where $a(\Omega,z)$ is the photon annihilation operator with the frequency $\omega_0+\Omega$ and the longitudinal coordinate $z$.

Another operator, $\epsilon(\Omega,z)$, is defined by the relation~\cite{Kolobov99}
\begin{equation}\label{aeps}
a(\Omega,z) = \epsilon(\Omega,z)e^{ik(\Omega)\left(z-z_0\right)},
\end{equation}
where $k(\Omega)$ is the wave vector of the down-converted light in the crystal, corresponding to the frequency $\omega_0+\Omega$. Operator $\epsilon(\Omega,z)$ is convenient for the description of the nonlinear interaction inside the crystal and is a quantum-mechanical analog of the classical slowly-varying amplitude~\cite{BoydBook}. The point $z_0$, where $a(\Omega,z_0) = \epsilon(\Omega,z_0)$, can be placed anywhere inside the crystal. Variation of this point results in a phase factor for the slowly varying amplitude. The transformation Eq.~(\ref{aeps}) can be alternatively considered as passage to the interaction picture \cite{Wasilewski06,Lvovsky07}. This passage is dependent on the reference frame, and the point $z_0$ can be viewed as the origin of the ``interaction picture reference frame''. We recall that the ``working reference frame'' has its origin at the crystal input, so that the field evolution in the crystal is considered between the points $z=0$ and $z=L$, where $L$ is the crystal length. Such a choice of the working reference frame simplifies the formulas. We analyze below several choices for $z_0$, including $z_0=0$ and $z_0=L/2$, used in the literature.

The classical pump field is decomposed into Fourier components as
\begin{align}
E^{(+)}_p(t,z) = \frac1{2\pi}\int \mathcal{E}^\text{in}(\Omega)e^{ik_p(\Omega)z -i\left(\omega_p+\Omega\right)t}d\Omega,
\end{align}
where $k_p(\Omega)$ is the wave vector of the pump at the frequency $\omega_p+\Omega$ and $\mathcal{E}^\text{in}(\Omega)$ is the Fourier transform of the pump wave at the crystal input at $z=0$. We note that for the type-I phase-matching the polarization of the pump wave is orthogonal to that of the generated light and the refractive index determining the dependence $k_p(\Omega)$ is different from that determining the dependence $k(\Omega)$.

Instead of working with continuous functions of frequency, we prefer to define discrete modes at frequencies, equidistantly spaced by the value $2\pi/T$, which is determined by the duration $T$ of the time interval where the quantum field is considered. We limit our consideration to the frequency band $[\omega_0-\Delta\Omega,\omega_0+\Delta\Omega]$, where the photons are mainly generated, and call the photons with positive (negative) detunings signal (idler) ones. We split each band into $m$ modes, and write the discrete frequencies as
\begin{equation}\label{Omegal}
 \Omega_l = \left(l-m-\frac12\right)\frac{2\pi}T,
\end{equation}
where $l$ runs from 1 to $n=2m$.

For these discrete modes we have a set of annihilation operators $a_l(z)=a(\Omega_l,z)$ and a set of slowly-varying amplitudes $\epsilon_l(z)=\epsilon(\Omega_l,z)$. Composing the vectors of length $2n$ we obtain $\mathbf{a}(z)=(a_1(z),...a_n(z),a_1^\dagger(z),...,a_n^\dagger(z))^T$, and $\boldsymbol{\xi}(z)=(\epsilon_1(z),...\epsilon_n(z),\epsilon_1^\dagger(z),...,\epsilon_n^\dagger(z))^T$.
		
The evolution of the downconverted light in the crystal is described by the following equation~\cite{Wasilewski06,Bennink02}:
\begin{eqnarray}\label{evolution}
 \frac\partial{\partial z}a_j(z) &=& ik(\Omega_j) a_j(z) \\\nonumber
 &-&i\sigma \sum_{l=1}^n \mathcal{E}_{j+l}^\text{in} e^{ik_p(\Omega_j+\Omega_l)z}a_l^\dagger(z),
\end{eqnarray}
where $\mathcal{E}_{j+l}^\text{in} = \mathcal{E}^\text{in}(\Omega_j+\Omega_l)$, and the first term in the right-hand side corresponds to dispersive propagation, while the second term corresponds to nonlinear interaction with the coupling constant $\sigma$, proportional to the nonlinear susceptibility of the crystal. In this nonlinear interaction a photon of the pump wave with the frequency $\Omega_j+\Omega_l$ is annihilated and converted into two photons with frequencies $\Omega_j$ and $\Omega_l$ respectively.

For slowly-varying amplitudes we obtain from Eqs.~(\ref{aeps}), (\ref{evolution})
\begin{eqnarray}\label{evolution2}
 \frac\partial{\partial z}\epsilon_j(z) = -i\sigma \sum_{l=1}^n \mathcal{E}_{j+l}^0 e^{i\Delta_{jl}(z-z_0)}\epsilon_l^\dagger(z),
\end{eqnarray}
where
\begin{equation}\label{E0}
 \mathcal{E}_{j+l}^0 = \mathcal{E}_{j+l}^\text{in}e^{ik_p(\Omega_j+\Omega_l)z_0}
\end{equation}
is the Fourier component of the pump field at $z=z_0$, and $\Delta_{jl} = k_{p}(\Omega_j+\Omega_l)- k(\Omega_j)-k(\Omega_l)$ is the phase mismatch of the corresponding modes. Equation (\ref{evolution2}) can be rewritten in a compact matrix form as
\begin{equation}\label{eq}
 \frac\partial{\partial z}\boldsymbol{\xi}(z) = -i\mathbf{F}(z) \boldsymbol{\xi}(z),
\end{equation}
where the matrix $\mathbf{F}(z)$ is given by
\begin{equation}\label{F}
 \mathbf{F}(z) = \left(\begin{array}{cc} 0 & \mathbb{F}(z) \\ -\mathbb{F}^*(z) & 0 \end{array}\right)
\end{equation}
with the complex symmetric matrix $\mathbb{F}(z)$ defined as
\begin{equation}\label{Fjl}
 \mathbb{F}_{jl}(z) =  \sigma \mathcal{E}_{j+l}^0 e^{i\Delta_{jl}(z-z_0)}.
\end{equation}

Equation~(\ref{eq}) can be written in a Hamiltonian form $\partial_z\boldsymbol{\xi}(z) = -i[\boldsymbol{\xi}(z),\mathcal{F}(z)]$ with a $z$-dependent Hamiltonian
\begin{equation}\label{Hamilt}
 \mathcal{F}(z) = \frac12\left(\bar\epsilon^\dagger \mathbb{F}(z)\bar\epsilon^{\dagger T} + \bar\epsilon^T \mathbb{F}^*(z)\bar\epsilon\right) = \frac12\boldsymbol{\xi}^\dagger \mathbf{K}\mathbf{F}(z)\boldsymbol{\xi}.
\end{equation}
where $\bar{\epsilon}=(\epsilon_1,...\epsilon_n)^T$.

Solution of Eq.~(\ref{eq}) can be written in the form of a $T$-exponent either for a symplectic matrix, transforming $\boldsymbol{\xi}(0)$ to $\boldsymbol{\xi}(L)$:
\begin{equation}\label{Texp}
 \mathbf{S} = \mathcal{T}e^{-i\int_{0}^{L}\mathbf{F}(z)dz},
\end{equation}
or for a unitary operator
\begin{equation}\label{TexpU}
 \mathcal{U} = \mathcal{T}e^{-i\int_0^L\mathcal{F}(z)dz},
\end{equation}
where $\mathcal{T}$ denotes the $z$-ordering superoperator, placing the operators (matrices) with higher $z$ to the left in the Taylor series of the exponential. Comparing Eq.~(\ref{TexpU}) with the definition of the transformation generator in Sec.~\ref{general}, we conclude that the latter is given by $\mathcal{H}=\mathcal{F}L$ in the case of $z$-independent Hamiltonian, which can be met, e.g. in the case of perfect phasematching for all modes, where $\Delta_{jl}=0$. This fact explains the widely used referring to the transformation generator $\mathcal{H}$ as ``Hamiltonian''. However, in the general case the Hamiltonian of the field is $z$-dependent and is defined by Eq.~(\ref{Hamilt}).

\subsection{Magnus expansion}\label{sec:Magnus}

Both $T$-exponents, Eqs.~(\ref{Texp}) and (\ref{TexpU}), can be represented as Magnus expansions~\cite{Magnus54,Blanes09}
\begin{eqnarray}\label{Magnus}
\mathbf{S}&=&e^{\boldsymbol\Omega_1+\boldsymbol\Omega_2+\boldsymbol\Omega_3+\dots},\\\label{MagnusU}
\mathcal{U}&=&e^{\boldsymbol\Xi_1 + \boldsymbol\Xi_2 + \boldsymbol\Xi_3 + \dots},
\end{eqnarray}
where $\boldsymbol\Omega_k$ is a $2n\times 2n$ matrix proportional to $|\sigma|^k$. The first term in Eq.~(\ref{Magnus}) is
	\begin{eqnarray}\label{Omega1}
	 \boldsymbol\Omega_1 &=& -i\int_0^L dz\, \mathbf{F}(z),
	\end{eqnarray}
and the higher order terms are expressed via integrals over nested commutators of $\mathbf{F}(z)$ with itself at different spatial points. Equivalently, $\boldsymbol\Xi_k$ is an operator proportional to $|\sigma|^k$, and the first term in Eq.~(\ref{MagnusU}) is
	\begin{eqnarray}\label{O1}
	 \boldsymbol\Xi_1 &=& -i\int_0^L dz\, \mathcal{F}(z),
	\end{eqnarray}
while the higher order terms are expressed via integrals over nested commutators of $\mathcal{F}(z)$ with itself at different spatial points.

First order Magnus approximation to the symplectic matrix is denoted as $\mathbf{S}_1$ and obtained by leaving the first term only in the exponent of Eq.~(\ref{Magnus}), while a similar approximation to the evolution operator is denoted $\mathcal{U}_1$ and obtained by leaving the first term only in the exponent of Eq.~(\ref{MagnusU}). Let us show that $\mathcal{U}_1$ is a Gaussian transformation with the corresponding symplectic matrix $\mathbf{S}_1$. Substituting  Eq.~(\ref{Hamilt}) into Eq.~(\ref{O1}) we obtain
\begin{equation}\label{O1bis}
 \boldsymbol\Xi_1 = \frac12\boldsymbol{\xi}^\dagger \mathbf{K}\boldsymbol\Omega_1\boldsymbol{\xi},
\end{equation}
corresponding to a Gaussian transformation with the symplectic matrix $e^{\mathbf{K}^2\boldsymbol\Omega_1}=e^{\boldsymbol\Omega_1}$, which is exactly $\mathbf{S}_1$.

The symplectic matrix $\mathbf{S}_1$ is determined by its generator, which we denote by $\mathbf{H}^{[1]}$, as  $ \mathbf{S}_1 = \exp\left(-i\mathbf{K}\mathbf{H}^{[1]}\right)$. The generator is determined according to Eq.~(\ref{H}) by two $n\times n$ matrices: $\mathbb{H}_0^{[1]}$, which is Hermitian, and $\mathbb{H}_I^{[1]}$, which is complex symmetric, as
\begin{equation}\label{Hk}
 \mathbf{H}^{[1]} = \left(\begin{array}{cc} \mathbb{H}_0^{[1]} & \mathbb{H}_I^{[1]} \\ \mathbb{H}_I^{[1]*} & \mathbb{H}_0^{[1]*} \end{array}\right).
\end{equation}

We have from Eqs.~(\ref{Omega1}), (\ref{F}) $\mathbb{H}_0^{[1]}=0$, and
\begin{equation}\label{HI1}
 \mathbb{H}_I^{[1]} = \int_0^L \mathbb{F}(z)dz,
\end{equation}
wherefrom with the help of Eq.~(\ref{Fjl}) we obtain
\begin{equation}\label{HIm}
 (\mathbb{H}_I^{[1]})_{jl}=\sigma L\mathcal{E}_{j+l}^0e^{i\Delta_{jl}\left(L/2-z_0\right)}\mathrm{sinc}(\Delta_{jl}L/2).
\end{equation}

From Eq.~(\ref{O1bis}) we have
\begin{equation}
\mathcal{U}_1=e^{\frac12\boldsymbol{\xi}^\dagger \mathbf{K}\boldsymbol\Omega_1\boldsymbol{\xi}} = e^{-\frac{i}2\left(\bar\epsilon^\dagger \mathbb{H}_I^{[1]}\bar\epsilon^{\dagger T} + \bar\epsilon^T \mathbb{H}_I^{[1]*}\bar\epsilon\right)}.
\end{equation}
Comparing this expression with Eq.~(\ref{BMRU2}), we arrive at the conclusion that in the first order of Magnus expansion the squeezing matrix is $\Gamma=-i\mathbb{H}_I^{[1]}$. As consequence, the squeezing eigenmodes are determined by the Takagi factorization $-i\mathbb{H}_I^{[1]} = \mathbb{VRV}^T$, where $\mathbb{V}$ is unitary. We note that the Takagi factorization for $i\mathbb{H}_I^{[1]*}$ reads $i\mathbb{H}_I^{[1]*} = \mathbb{V}^*\mathbb{RV}^\dagger$.  When the matrices $\mathbb{V}$ and $\mathbb{R}$ are found, the Bloch-Messiah reduction in the first order of Magnus expansion can be easily written in the form
\begin{eqnarray}\nonumber
 \mathbf{S}_1 &=& \exp\left(\begin{array}{cc} 0 & -i\mathbb{H}_I^{[1]} \\ i\mathbb{H}_I^{[1]*} & 0 \end{array}\right) = \exp\left(\begin{array}{cc} 0 & \mathbb{VRV}^T \\ \mathbb{V}^*\mathbb{RV}^\dagger & 0 \end{array}\right) \\\label{BMR1}
 &=& \left(\begin{array}{cc} \mathbb{V} & 0 \\ 0 &\mathbb{V}^* \end{array}\right) \exp\left(\begin{array}{cc} 0 & \mathbb{R} \\ \mathbb{R} & 0 \end{array}\right) \left(\begin{array}{cc} \mathbb{V} & 0 \\ 0 & \mathbb{V}^* \end{array}\right)^\dagger,
\end{eqnarray}
which is a special case of Eq.~(\ref{BMR}) with $\mathbb{Q}=\mathbb{V}$. Note that in this special case the coincidence of the input and the output eigenmodes reflects certain symmetry of the first order of Magnus expansion which does not hold in the higher orders.

The unitary evolution operator in the state space is given by Eq.~(\ref{BMRU}) with $\mathbb{H}_Q=\mathbb{H}_V$. When this operator acts on the vacuum, it produces a multimode squeezed state
\begin{equation}\label{state1}
  |\Psi\rangle = e^{-i\bar a^\dagger \mathbb{H}_V\bar a} e^{\frac12\left(\bar a^\dagger \mathbb{R}\bar a^{\dagger T} - \bar a^T \mathbb{R}\bar a\right)}|0\rangle.
\end{equation}

In the new basis, defined as
\begin{equation}\label{basis}
 \boldsymbol{\xi}' = \left(\begin{array}{cc} \mathbb{V} & 0 \\ 0 & \mathbb{V}^* \end{array}\right)^\dagger\boldsymbol{\xi},
\end{equation}
the transformation of the field operators has a form of mode-wise squeezing:
\begin{equation}\label{squeezing}
 \boldsymbol{\xi}_{out}' = \left(\begin{array}{cc} \cosh(\mathbb{R}) & \sinh(\mathbb{R}) \\ \sinh(\mathbb{R}) & \cosh(\mathbb{R}) \end{array}\right)\boldsymbol{\xi}_{in}'.
\end{equation}

The new modal functions are $\mathbf{f}'(\mathbf{r},t)=\mathbb{V}^\dagger\mathbf{f}(\mathbf{r},t)$. Note, that the field operator is invariant with respect to the choice of the modal basis: $A(\mathbf{r},t) = \mathbf{f}'{}^\dagger(\mathbf{r},t)\boldsymbol{\xi}' = \mathbf{f}^\dagger(\mathbf{r},t)\boldsymbol{\xi}$.

In the present work we limit our consideration to the first order of the Magnus expansion, which has been found a good approximation for not very high squeezing, below 12 dB \cite{Christ13,Lipfert18}. An analytic treatment of higher orders of the Magnus expansion in PDC can be found in Refs.~\cite{Quesada14,Quesada15,Lipfert18}.

\subsection{Reduction of the Takagi factorization to a real symmetric spectral decomposition}\label{eig}

As we have seen above, in the first order of the Magnus expansion the squeezing eigenmodes are defined by the Takagi factorization applied to the complex symmetric squeezing matrix $\Gamma=-i\mathbb{H}_I^{[1]}$, whose matrix elements are determined by Eq.~(\ref{HIm}). Takagi factorization can always be realized for a complex symmetric matrix by performing a singular value decomposition and then computing a balancing matrix \cite{Cariolaro16a,Cariolaro16b}. However, this procedure can be significantly simplified under certain conditions. We notice that the elements of the squeezing matrix can be made real if two conditions are satisfied: (i) the origin of the interaction picture frame is chosen at the center of the crystal, $z_0=L/2$, and (ii) the pump pulse at this point is transform-limited, i.e. all elements of the vector $\mathcal{E}_{j+l}^0$ have the same phase $\varphi_0$. Then, as is easily seen from  Eq.~(\ref{HIm}), all elements of the squeezing matrix have the phase $\varphi=\varphi_0+\arg(\sigma)-\pi/2$. This phase can be removed by a trivial transformation $\bar{\epsilon}\to\bar{\epsilon}e^{i\varphi/2}$ in Eq.~(\ref{eq}) which makes the squeezing matrix real.

When the squeezing matrix is real symmetric it can be represented in the form of spectral decomposition
\begin{equation}\label{orth}
 \Gamma=\mathbb{O}\Lambda \mathbb{O}^T,
\end{equation}
where $\mathbb{O}$ is a real orthogonal matrix and $\Lambda$ is the real diagonal matrix of eigenvalues.
If all eigenvalues of $\Gamma$ are non-negative, i.e. the squeezing matrix is positive semidefinite \cite{HornJohnson}, Eq.~(\ref{orth}) is a Takagi factorization, the columns of $\mathbb{O}$ are the modal functions of the modes of squeezing, and the eigenvalues of $\Gamma$ are the squeezing parameters. In the opposite case, when some of the eigenvalues of $\Gamma$ are negative, the Takagi factorization can be reconstructed from the spectral decomposition in the following way. Let $\bar e_j=(0,...,1,...0)^T$ be the $1\times n$ column vector with the $j$th element equal to 1 and all others zero. Then $\mathbb{O}\bar e_j=O_j$ is the eigenvector corresponding to the $j$th eigenvalue $\lambda_j$. We define a set of column vectors $V_j$ as follows: $V_j=O_j$ if $\lambda_j\ge0$ and $V_j=iO_j$ if $\lambda_j<0$. We build a square matrix $\mathbb{V}=(V_1,...,V_n)$ from the columns $V_j$. It is easy to verify that this matrix is unitary, since $\mathbb{O}$ is orthogonal. We define also $r_j=|\lambda_j|$ and build a diagonal matrix $\mathbb{R}$ with $r_j$ at diagonal. Now we can rewrite Eq.~(\ref{orth}) as
\begin{eqnarray}\label{orth2}
 \Gamma &=& \mathbb{O}\sum_j\lambda_j\bar e_j\bar e_j^T \mathbb{O}^T = \sum_j\lambda_jO_jO_j^T\\\nonumber
 &=& \sum_j r_jV_jV_j^T = \mathbb{VRV}^T,
\end{eqnarray}
which is a Takagi factorization for $\Gamma$. Thus, in the case of real squeezing matrix the modal functions of the squeezing eigenmodes, defined by the columns of $\mathbb{V}$, are either purely real or purely imaginary, which greatly simplifies their analysis and visualisation.

The price to be paid for the possibility to work with real modal functions is, as shown above, the necessity to have the pump pulse transform-limited at the center of the crystal. In a physical experiment the pump pulse is typically generated as transform-limited by the laser, and is therefore transform-limited at the crystal input. During its propagation inside the crystal the pulse experiences dispersion determined by the dependence $k_p(\Omega)$ in Eq.~(\ref{E0}). As a result, at the point $z_0=L/2$ it acquires the phase $k_p(\Omega)L/2$. In the quadratic dispersion approximation we can write
\begin{equation}\label{kp}
 k_p(\Omega)L/2 \approx k_{p0}L/2 + \frac12k_{p0}'L\Omega + \frac14 k_{p0}''L\Omega^2,
\end{equation}
where $k_{p0}$, $k_{p0}'$ and $k_{p0}''$ are the value and the two derivatives of $k_p(\Omega)$
at $\Omega=0$. Thus, the phase of the pump pulse at the crystal center is composed of three major terms: (i) constant phase shift due to optical oscillations at the carrier frequency, which can be included into the phase $\varphi_0$, (ii) absolute group delay of the pump pulse, which can be put to zero by adjusting the time origin, and (iii) the frequency chirp, which is equivalent to pulse spread in the temporal domain. The latter phase cannot be removed by a simple mathematical transformation and its removal has physical meaning. First, it can be neglected if the pump bandwidth $\Omega_p$ is small enough so that $\Omega_p\ll2\sqrt{2}(k_{p0}''L)^{-1/2}$. Second, it can be compensated by pre-chirping the pump pulse with the frequency chirp $-k_{p0}''L\Omega^2/4$. Since most nonlinear crystals have positive dispersion ($k_{p0}''>0$), this pre-chirp should be negative and thus require an active phase control.

In the examples of the squeezing eigenmodes presented below in Sec.~\ref{example} we suppose that one of these conditions holds and the modal functions of squeezing eigenmodes, defined in terms of the slowly-varying amplitudes $\epsilon(\Omega)$, are either real or imaginary. For the choice of $z_0=L/2$ the slowly-varying amplitude coincides with the full field $a(\Omega)$ at the crystal center, and at the crystal output the latter operator has additional phase determined by Eq.~(\ref{aeps}). This phase includes the three terms shown in Eq.~(\ref{kp}) for the pump: constant phase shift, group delay and chirp. The full analysis of the output modal functions is outside of the scope of the present treatment, which is limited to the squeezing eigenvalues and eigenmodes at the crystal center. We note that recently the eigenmodes of a similar parametric nonlinear process -- sum-frequency generation -- have been found for the case of pre-chirped pump pulse \cite{PateraSPIE2018b}.

\subsection{Summary}
In this section we have shown how the squeezing eigenmodes and the corresponding squeezing eigenvalues can be obtained from the Takagi factorization of the squeezing matrix of the multimode optical field. Further, we have shown how the squeezing matrix can be obtained from the physical model of PDC, and discussed a special case where this matrix is real. In the next section we apply this general theory to twin beams of light, having remarkable additional symmetries.

\section{Twin beams of light}\label{twin}

\subsection{Parametric downconversion in the case of twin beams}

Up to this point our consideration of PDC related any case of phase-matching: degenerate and nondegenerate. In the following sections we apply the developed formalism to the specific case of twin beams and demonstrate some important regularities inherent in this case. Twin beams are generated in non-degenerate PDC where signal and idler photons are discriminated by frequency, direction, polarization or a combination of these variables. For the description of twin beams we need to consider a set of $m$ signal modes and $m$ idler modes which should be united in the formalism of the previous sections with the total number of $n=2m$ modes. The fact that in every photon pair the photons are always created in different sets of modes imposes additional constraints on the matrices $\mathbb{H}_0$ and $\mathbb{H}_I$, defining the Gaussian unitary transformation.

To be specific, we again limit our consideration to the case of twin beams discriminated by frequency, with obvious extensions to other cases. Twin beams generation is characterized by phase-matching conditions, providing emission of signal photons in a frequency band $[\omega_s-\Delta\Omega_0,\omega_s+\Delta\Omega_0]$ and idler photons in a frequency band $[\omega_i-\Delta\Omega_0,\omega_i+\Delta\Omega_0]$ such that the frequency gap between the highest idler frequency $\omega_i+\Delta\Omega_0$ and lowest signal frequency $\omega_s-\Delta\Omega_0$ is much higher than the width of the pump spectrum. In this case emission of two photons into one band (signal or idler) is practically impossible, because the sum of their frequencies would be outside of the spectrum of the pump. As consequence, the elements of the matrix $\mathbb{H}_I^{[1]}$, defined by  Eq.~(\ref{HIm}), are almost zero when two indices belong to the same group of modes (signal or idler). With a good degree of approximation we can put them to zero exactly and write the matrix $\mathbf{H}^{[1]}$ defining the generator of the Gaussian transformation in the first-order Magnus expansion as
\begin{equation}\label{HTB}
 \mathbf{H}^{[1]} = \left(\begin{array}{cccc} 0 & 0 & 0 & J \\
                                        0 & 0 & J^T & 0 \\
                                        0 & J^* & 0 & 0 \\
                                        J^\dagger & 0 & 0 & 0
                    \end{array}\right),
\end{equation}
where $J$ is a complex $m\times m$ matrix, known as joint spectral amplitude (JSA) for two photons generated in an elementary nonlinear process. This matrix in general possesses no properties of symmetry.

As indicated above, our consideration here is limited to the first order Magnus approximation and for simplicity we omit the superscripts and the subscripts indicating the first order of Magnus expansion. We adopt the name of ``twin beam Gaussian transformation'' for a unitary transformation $\mathcal{U} = e^{-i\mathcal{H}}$ with a generator $\mathcal{H} = \frac12\mathbf{a}^\dagger \mathbf{H} \mathbf{a}$, where the matrix $\mathbf{H}$ has the structure defined by Eq.~(\ref{HTB}).

The main regularity inherent in twin beams is stated by the following theorem.

\begin{theorem}\label{teo}
All squeezing eigenvalues of a twin beam Gaussian transformation have multiplicities of at least two.
\end{theorem}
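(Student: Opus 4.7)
The plan is to trace the squeezing eigenvalues back to the Takagi factorization of the squeezing matrix and then exploit the block-antidiagonal structure that the twin beam hypothesis imposes on it. From Sec.~\ref{eigenmodes}, in the first-order Magnus approximation the squeezing eigenvalues appear in the diagonal matrix $\mathbb{R}$ of the Takagi factorization $\Gamma = \mathbb{V}\mathbb{R}\mathbb{V}^T$ of the squeezing matrix $\Gamma = -i\mathbb{H}_I^{[1]}$. Under the twin beam assumption, Eq.~(\ref{HTB}) forces $\mathbb{H}_0=0$ and
$$\Gamma = -i\begin{pmatrix} 0 & J \\ J^T & 0 \end{pmatrix},$$
so $\Gamma$ is entirely controlled by the JSA $J$ through two $m\times m$ off-diagonal blocks.

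The first step I would take is to read the squared squeezing eigenvalues off $\Gamma\Gamma^\dagger$, which for a complex symmetric matrix governs the singular values entering the Takagi decomposition. The block-antidiagonal form gives immediately
$$\Gamma\Gamma^\dagger = \begin{pmatrix} JJ^\dagger & 0 \\ 0 & J^T J^* \end{pmatrix},$$
and since $J^T J^* = (J^\dagger J)^T$, the two diagonal blocks share the common spectrum $\{d_k^2\}$ of squared singular values of $J$. Every such eigenvalue therefore enters with multiplicity at least twice its multiplicity as a singular value of $J$---once from each block---and passing to non-negative square roots yields the claimed multiplicity of at least two for each squeezing eigenvalue.

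To upgrade this spectral count into an actual Takagi factorization (and thereby confirm that the pairing at the level of $\Gamma\Gamma^\dagger$ lifts to the decomposition relevant for Bloch-Messiah), I would exhibit $\mathbb{V}$ explicitly from the SVD $J = UDW^\dagger$. A short calculation shows that the choice
$$\mathbb{V} = \frac{1}{\sqrt{2}}\begin{pmatrix} U & iU \\ -iW^* & -W^* \end{pmatrix},\qquad \mathbb{R} = \mathrm{diag}(D,D)$$
is unitary and satisfies $\Gamma = \mathbb{V}\mathbb{R}\mathbb{V}^T$, with each singular value $d_k$ of $J$ appearing twice on the diagonal of $\mathbb{R}$. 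This renders the pairing of squeezing eigenmodes concrete and, importantly, displays the signal/idler composition of the doubly-degenerate eigenspaces.

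The hard part of the argument, I expect, is the phase bookkeeping in the Takagi ansatz: imposing unitarity of $\mathbb{V}$ together with vanishing of the diagonal blocks of $\mathbb{V}\mathbb{R}\mathbb{V}^T$ and reproduction of the off-diagonal phase $-i$ produces a small system of conditions on the four column phases that must be solved consistently. Once these phases are pinned down, direct multiplication closes the verification, and the bound ``at least two'' (rather than ``exactly two'') cleanly accommodates any extra degeneracy already present in the Schmidt spectrum of $J$.
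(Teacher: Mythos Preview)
Your proof is correct and, in its constructive half, lands on exactly the Takagi factorization the paper builds: with the identification $-iJ=CR_JD^\dagger$ (i.e.\ $C=U$, $D=iW$), your $\mathbb{V}$ coincides column-for-column with the paper's $\mathbb{V}=\tilde{\mathbb{V}}\,\mathbb{X}$, obtained from the SVD of $-iJ$ followed by the per-pair rotation $x$ that Takagi-factorizes $\sigma_x$. The main presentational difference is that the paper pauses at the intermediate block-antidiagonal form $\tilde{\mathbb{R}}$ and reads it as a decomposition into \emph{two-mode} squeezers before the final $\mathbb{X}$-rotation, which is where the physical link to the signal/idler Schmidt modes is made explicit; you go straight to the diagonal $\mathbb{R}$.

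Where your route genuinely adds something is the opening spectral argument: observing that $\Gamma\Gamma^\dagger=\mathrm{diag}(JJ^\dagger,(J^\dagger J)^T)$ already proves the multiplicity-two statement without ever writing down $\mathbb{V}$, since the Takagi singular values are the square roots of the eigenvalues of $\Gamma\Gamma^\dagger$. The paper does not give this shortcut; it is a cleaner and more elementary proof of the theorem proper, while the paper's longer construction buys the explicit eigenmode structure (and its Schmidt-mode interpretation) that your second step then recovers.
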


\begin{proof}
Applying the SVD to the matrix $-iJ$, we write $-iJ=CR_JD^\dagger$, where $C$ and $D$ are unitary $m\times m$ matrices and $R_J$ is a diagonal $m\times m$ matrix with non-negative elements. Now we can write the $n\times n$ matrix $\mathbb{H}_I$, defined according to Eq.~(\ref{H}), as
\begin{equation}\label{HITB}
 -i\mathbb{H}_I = -i\left(\begin{array}{cc} 0 & J \\ J^T & 0 \end{array}\right) = \mathbb{\tilde V} \mathbb{\tilde R} \mathbb{\tilde V}^T,
\end{equation}
where
\begin{equation}\label{VR}
 \mathbb{\tilde V} = \left(\begin{array}{cc} C & 0 \\ 0 & D^* \end{array}\right),\,\,\,\, \mathbb{\tilde R}=\left(\begin{array}{cc} 0 & R_J \\ R_J & 0 \end{array}\right).
\end{equation}
Therefore the full $2n\times 2n$ generator matrix defined by Eq.~(\ref{HTB}) is
\begin{equation}\label{OBMR1}
 -i\mathbf{H} = \left(\begin{array}{cc} \mathbb{\tilde V} & 0 \\ 0 & \mathbb{\tilde V}^* \end{array}\right) \left(\begin{array}{cc} 0 & \mathbb{\tilde R} \\ -\mathbb{\tilde R} & 0 \end{array}\right) \left(\begin{array}{cc} \mathbb{\tilde V} & 0 \\ 0 & \mathbb{\tilde V}^* \end{array}\right)^\dagger
\end{equation}
and the corresponding symplectic matrix is
\begin{equation}\label{OBMR2}
 \mathbf{S} = e^{-i\mathbf{K}\mathbf{H}} = \left(\begin{array}{cc} \mathbb{\tilde V} & 0 \\ 0 & \mathbb{\tilde V}^* \end{array}\right) \exp\left(\begin{array}{cc} 0 & \mathbb{\tilde R} \\ \mathbb{\tilde R} & 0 \end{array}\right) \left(\begin{array}{cc} \mathbb{\tilde V} & 0 \\ 0 & \mathbb{\tilde V}^* \end{array}\right)^\dagger,
\end{equation}
which corresponds to a decomposition into two-mode squeezers. Note, that this decomposition is not the Bloch-Messiah reduction, since the matrix $\mathbb{\tilde R}$ is not diagonal.

The true Bloch-Messiah reduction can be obtained by an additional rotation. For each pair of conjugated modes the sub-matrix of $\mathbb{\tilde R}$ is proportional to the Pauli matrix $\sigma_x$. The Takagi factorization for this matrix is $\sigma_x = xx^T$, where
\begin{equation}\label{x}
  x = \frac1{\sqrt{2}}\left(\begin{array}{cc} 1 & i \\ 1 & -i \end{array}\right).
\end{equation}
Defining the matrix $\mathbb{X}$ as a direct sum of $m$ matrices $x$ for each pair of conjugated modes, we rewrite Eq.~(\ref{HITB}) as
\begin{equation}\label{HITB2}
 -i\mathbb{H}_I = \left(\begin{array}{cc} C & 0 \\ 0 & D^* \end{array}\right)\mathbb{XRX}^T\left(\begin{array}{cc} C & 0 \\ 0 & D^* \end{array}\right)^T,
\end{equation}
where the diagonal matrix
\begin{equation}\label{RTB}
 \mathbb{R} = \left(\begin{array}{cc} R_J & 0 \\ 0 & R_J \end{array}\right)
\end{equation}
is the matrix of squeezing eigenvalues. Now, defining
\begin{equation}\label{VTB}
 \mathbb{V} = \left(\begin{array}{cc} C & 0 \\ 0 & D^* \end{array}\right)\mathbb{X},
\end{equation}
we interpret Eq.~(\ref{HITB2}) as the Takagi factorization of the squeezing matrix, which is equivalent to the Bloch-Messiah decomposition in the form Eq.~(\ref{BMR1}). It is clear from Eq.~(\ref{RTB}) that each squeezing eigenvalue is found twice in the diagonal of $\mathbb{R}$, and therefore has multiplicity of at least two.
\end{proof}

The columns of matrices $C$ and $D$ are known as modal functions of the signal and idler Schmidt modes of a photon pair generated in the elementary process of photon-pair creation \cite{Law00,Grice01}. It follows from the proof of Theorem~\ref{teo}, that the two squeezing eigenmodes corresponding to the same eigenvalue can be constructed from the Schmidt modes in a straightforward way. Let us denote the $j$th columns of the matrices $C$ and $D$ by $C_j$ and $D_j$ respectively, similar to how it was done in Sec.~\ref{eig} for $V_j$. Both $C_j$ and $D_j$ are $m\times1$ matrices (column vectors) and their corresponding singular value is $r_j$. From Eqs.~(\ref{VTB}), (\ref{x}) we obtain a $m\times2$ matrix containing two squeezing eigenmodes corresponding to the same eigenvalue $r_j$
\begin{equation}\label{VTB2}
 V(r_j) = \left(\begin{array}{cc} C_j & 0 \\ 0 & D_j^* \end{array}\right)x = \frac1{\sqrt{2}}\left(\begin{array}{cc} C_j & iC_j \\ D_j^* & -iD_j^*  \end{array}\right).
\end{equation}
Thus, one squeezing eigenmode can be constructed by concatenating the signal Schmidt mode $C_j$ and the conjugated idler Schmidt mode $D_j$. Another squeezing eigenmode can be constructed in a similar way from $iC_j$ and $iD_j$.

Let us study the uniqueness of the squeezing eigenmodes constructed in this way. For simplicity we suggest that each singular eigenvalue $r_j$ of $-iJ$ has multiplicity 1. Then by the Autonne's uniqueness theorem \cite{HornJohnson} the eigenvectors $C_j$ and $D_j$ are defined up to arbitrary phase $\varphi_j$, i.e. $C_j e^{i\varphi_j}$ and $D_j e^{i\varphi_j}$ are eigenvectors of an alternative SVD of $-iJ$. It means that the general form of Eq.~(\ref{VTB2}) is
\begin{eqnarray}\label{VTB3}
 V(r_j,\varphi_j) &=& \frac1{\sqrt{2}}\left(\begin{array}{cc} C_j e^{i\varphi_j} & iC_j e^{i\varphi_j}\\ D_j^* e^{-i\varphi_j} & -iD_j^*e^{-i\varphi_j}  \end{array}\right) \\\nonumber
 &=& V(r_j,0) \left(\begin{array}{cc} \cos\left(\varphi_j\right) & -\sin\left(\varphi_j\right) \\ \sin\left(\varphi_j\right) & \cos\left(\varphi_j\right)  \end{array}\right).
\end{eqnarray}

It follows from the last expression that the squeezing eigenmodes corresponding to a given eigenvalue with multiplicity 2 are defined up to an orthogonal rotation. This rotation corresponds to shifting the phases of the signal and idler Schmidt modes by the same angle.

The multiplicity of the eigenvalues and the lack of definiteness of the eigenfunctions can be understood from considering just two modes with photon annihilation operators $a$ and $b$. The following identity holds \cite{Milburn84}:
\begin{equation}\label{identity}
 e^{r\left(ab-a^\dagger b^\dagger\right)} =
 e^{\frac{r}2\left(c^2-c^{\dagger2}\right)} e^{\frac{r}2\left(d^2-d^{\dagger2}\right)},
\end{equation}
where the real number $r$ is the squeezing parameter \cite{Kolobov99} and we have introduced two other modes with photon annihilation operators $c = e^{i\pi/4}(a - ib)/\sqrt{2}$ and $d = e^{i\pi/4}(b - ia)/\sqrt{2}$. The left hand side of Eq.~(\ref{identity}) represents the two-mode squeeze operator \cite{Caves85}, while the right hand side of this equation represents a product of two single-mode squeeze operators. Thus, a two-mode squeezed state can be represented by a unitary rotation in the modal space as direct product of two single-mode squeezed states with the same squeezing parameter. An important and less known feature of this representation consists in invariance of the right hand side of Eq.~(\ref{identity}) with respect to orthogonal rotations in the modal space, $c\to c\cos\phi + d\sin\phi$, $d\to -c\sin\phi + d\cos\phi$, where $\phi$ is an arbitrary angle. Such a rotation is similar to that described by Eq.~(\ref{VTB3}).

The practical importance of Theorem~\ref{teo} is connected to its general applicability when the signal and the idler waves are well separated. Being a consequence of a symmetry condition, the multiplicity of the eigenvalues can be used as \emph{a priori} information for fitting and reconstruction of the field state at the measurement stage.

\subsection{Reduction of the Takagi factorization to a spectral decomposition for twin beams}\label{eigenvalue}

We have seen that the Takagi factorization of the squeezing matrix $\Gamma=-i\mathbb{H}_I$ can be obtained by an SVD in a two times smaller space of JSA. An alternative method consists in finding a spectral decomposition for the Hermitian matrix (cf. Theorem 7.3.3. of Ref.~\cite{HornJohnson})
\begin{equation}\label{Gammaa}
 \Gamma_a = \left(\begin{array}{cc} 0 & -iJ \\ iJ^\dagger & 0 \end{array}\right) = \mathbb{U}\Lambda \mathbb{U}^\dagger,
\end{equation}
which we call ``associated squeezing matrix''. It can be shown by direct substitution that the vectors
\begin{equation}\label{pm}
 U_j^{(+)} = \frac1{\sqrt{2}}\left(\begin{array}{c} C_j \\ D_j \end{array}\right),\,\,\,
 U_j^{(-)} = \frac1{\sqrt{2}}\left(\begin{array}{c} C_j \\ -D_j \end{array}\right)
\end{equation}
are eigenvectors of $\Gamma_a$ with the eigenvalues $r_j$ and $-r_j$ respectively. Comparing Eq.~(\ref{pm}) with Eq.~(\ref{VTB2}) we arrive at a conclusion that the modal functions of the squeezing eigenmodes can be obtained by finding the eigenvectors of $\Gamma_a$ and then complex conjugating the idler part and shifting the phase of the vector with the negative eigenvalue by $\pi/2$.

This method is especially fruitful in the case of real matrix $\Gamma=-i\mathbb{H}_I$, where we have a situation already met in Sec.~\ref{eig}. If we consider a  pump pulse which is transform-limited at the center of the crystal, then the squeezing matrix is real (for an appropriately chosen pump phase), as follows from Eq.~(\ref{HIm}) with $z_0=L/2$:
\begin{equation}\label{HIm2}
 \Gamma_{kl}=|\sigma| L\mathcal{E}_{j+l}^0\sinc(\Delta_{kl}L/2).
\end{equation}
The associated squeezing matrix $\Gamma_a$ in this case coincides with $\Gamma$ and is also real symmetric. The eigenvectors $U_j^{(+)}$ and $U_j^{(-)}$ can be chosen real in this case \cite{HornJohnson}. Note that the eigenvectors of a complex spectral decomposition, Eq.~(\ref{Gammaa}) are complex vectors defined up to an arbitrary phase, while the eigenvectors of the real spectral decomposition, Eq.~(\ref{orth}), are real vectors defined up to a sign. It means that by passing from a complex to a real matrix we find the eigenmodes in a more definite and reproducible way.

Indeed, choosing $U_j^{(+)}$ and $U_j^{(-)}$ real, we can omit the complex conjugation of the idler part of these vectors. It follows that $U_j^{(+)}$ and $iU_j^{(-)}$ are the modal functions of the squeezing eigenmodes corresponding to the squeezing eigenvalue $r_j$ with the multiplicity 2. Note that the two modal functions found by this algorithm are almost uniquely defined (up to sign), one of them being purely real, the other one being purely imaginary. Of course, orthogonal rotations in the space of these two functions give us other possible modal functions of the squeezing eigenmodes.

\section{Example of twin beams discriminated by frequency}\label{example}

\subsection{Nondegenerate phase matching: numerical solution}

As an example of the theory developed in the previous sections we consider twin beams generated by PDC in a $2$ mm long sample of BBO (beta barium borate, $\beta$-BaB$_2$O$_4$) crystal. For an angle between the crystal optical axis and the pump wave-vector of $\theta_0=\ang{28.81}$ the type-I phase matching is satisfied for pumping at 397.5 nm, the signal and idler frequency being around 677 nm and 963 nm, respectively. We consider a Gaussian pump pulse, which is transform limited at the center of the crystal with the intensity full width at half maximum $\tau_p=129$ fs, corresponding to a spectral width of $1.8$ nm. The squeezing matrix is given in this case by Eq.~(\ref{HIm2}) and is shown in Fig.~\ref{fig:jsa} as function of detunings $\Omega$ and $\Omega'$. The signal and the idler bands shown in Fig.~\ref{fig:jsa} are well separated, which corresponds to twin beams generation.

\begin{figure}[!ht]
\centering
\includegraphics[width=0.95\linewidth]{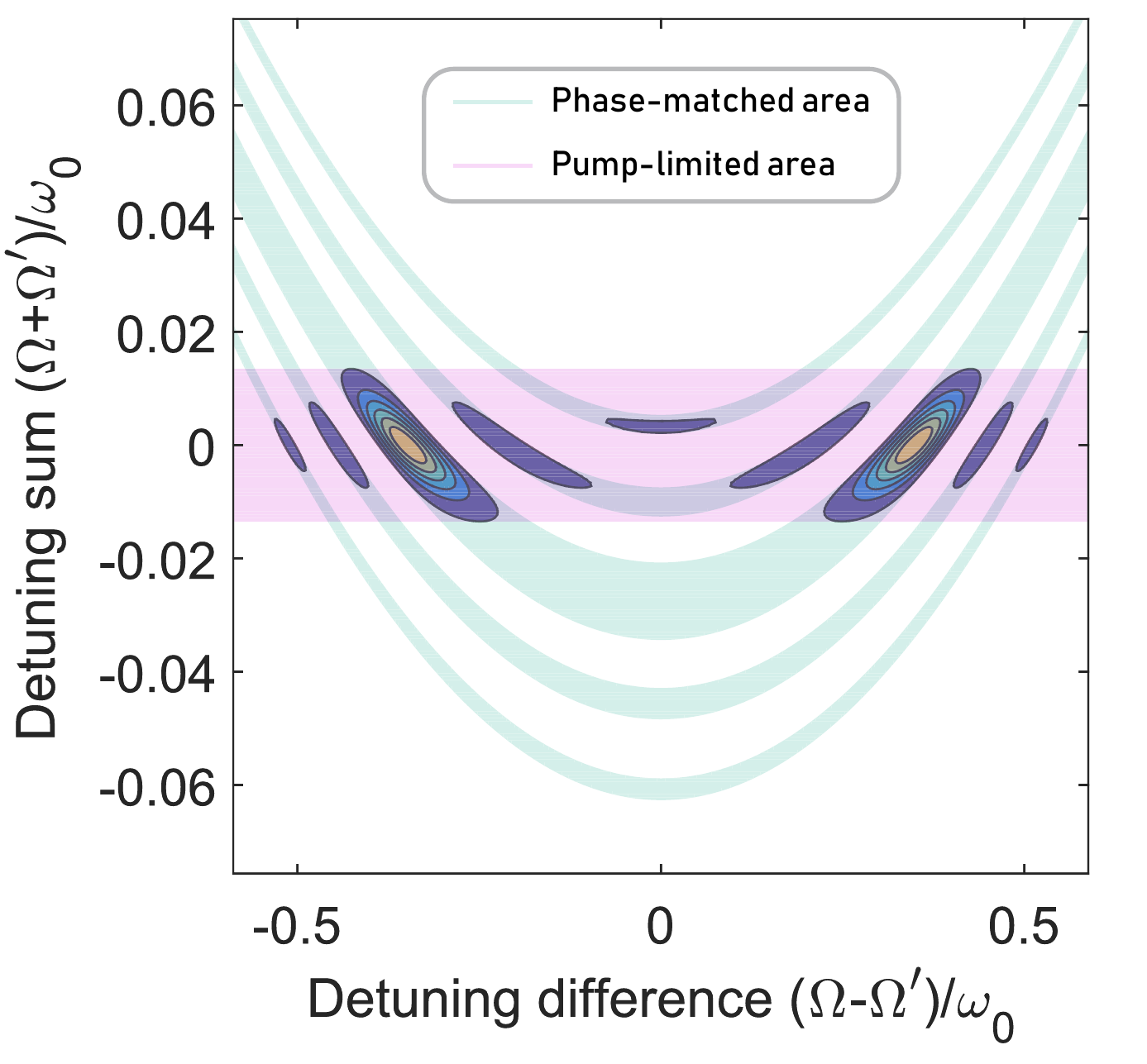}
\caption{The squeezing matrix $\Gamma=-i\mathbb{H}_I$ as a function of detunings $\Omega$ and $\Omega'$ for frequency-nondegenerate PDC. The matrix elements are essentially nonzero at the intersection of the phase-matched area, where the sinc function in Eq.~(\ref{HIm2}) is close to 1, and the pump-limited area, where $|\Omega+\Omega'|$ is within the pump bandwidth. The two regions where the elements of $\Gamma$ are close to maximum correspond to the signal and the idler bands.}
\label{fig:jsa}
\end{figure}

We solve numerically the eigenvalue problem for real symmetric matrix $\Gamma=-i\mathbb{H}_I$, using the Sellmeier equation for the dispersion of the BBO crystal. The squeezing eigenvalues, given by the absolute values of the eigenvalues of $\Gamma$, $r_k=|\lambda_k|$, are shown in Fig.~\ref{fig:values}. We see that each squeezing eigenvalue has multiplicity 2, as stated by Theorem~\ref{teo}, and its logarithm is a linear function of the eigenmode pair number, so that the eigenvalues can be approximated as
\begin{equation}\label{rk}
  r_{1+2l}=r_{2+2l} \approx r_1q_{num}^l
\end{equation}
with $q_{num}=0.8903$ and $l$ running from 0 to $\infty$.
\begin{figure}[!ht]
\centering
\includegraphics[width=0.99\linewidth]{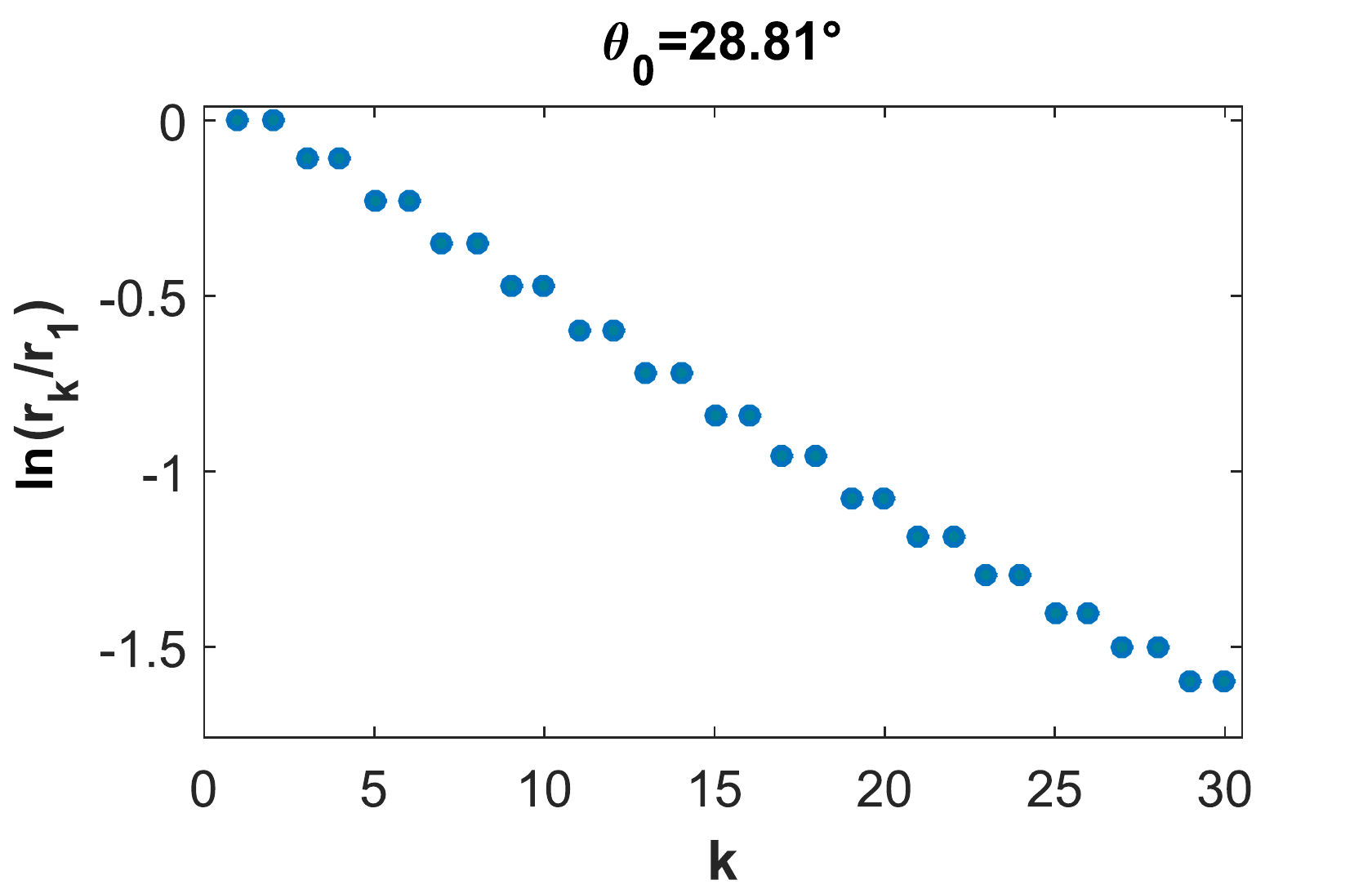}
\caption{Squeezing eigenvalues of twin beams generated in frequency-nondegenerate PDC. For well-separated signal and idler spectral regions all eigenvalues have multiplicity 2 and the logarithm of eigenvalue is a linear function of the mode pair number.}
\label{fig:values}
\end{figure}
\begin{figure}[!ht]
\centering
\includegraphics[width=0.98\linewidth]{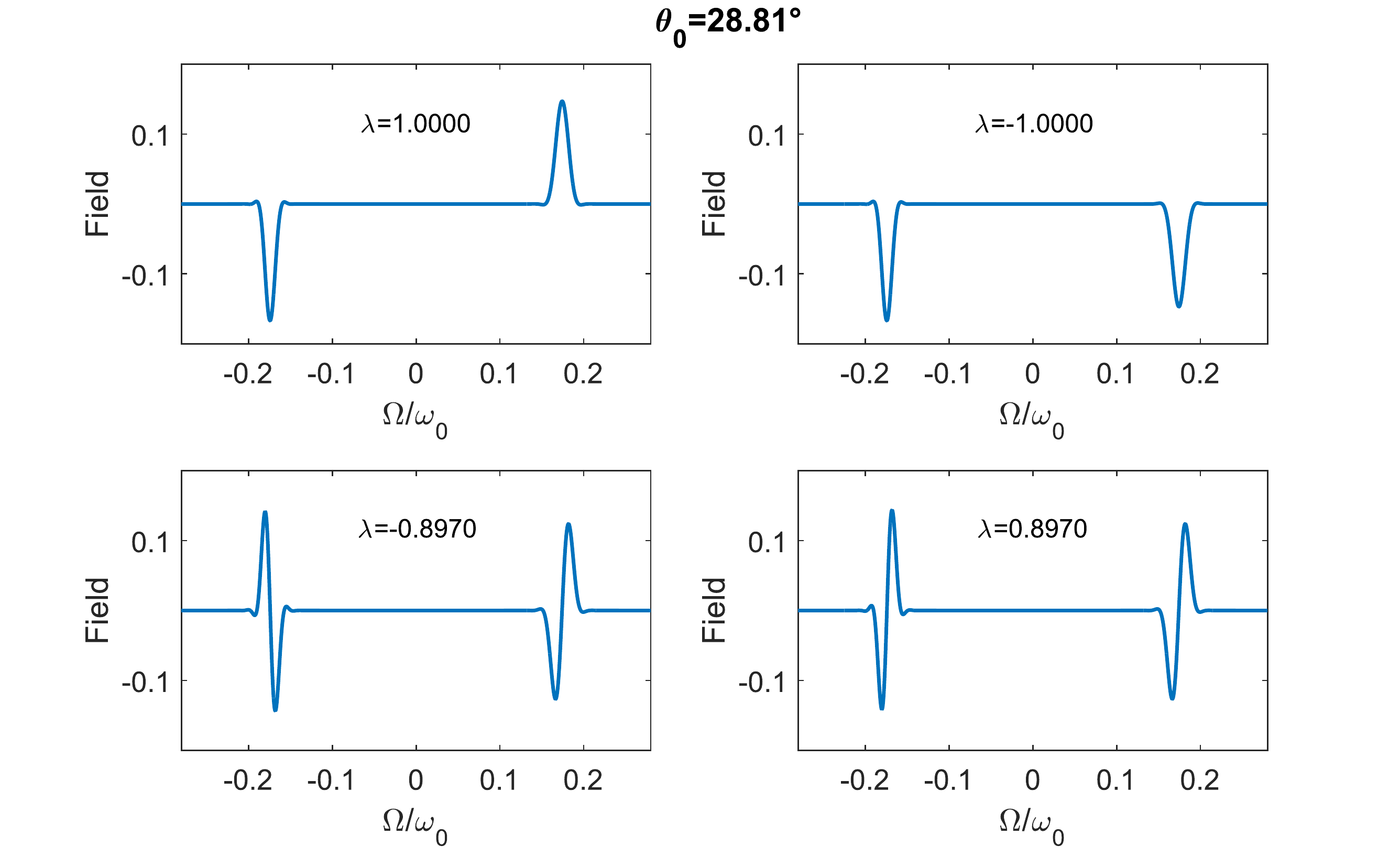}
\includegraphics[width=0.98\linewidth]{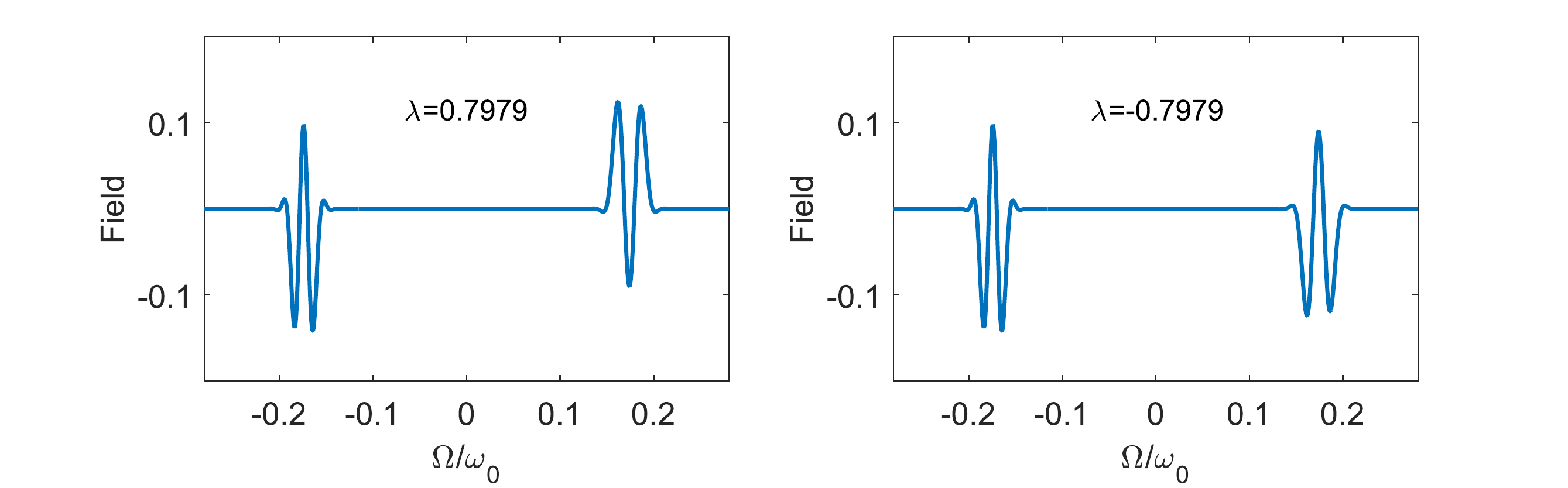}
\caption{Modal functions of the squeezing eigenmodes of twin beams generated in frequency-nondegenerate PDC. The corresponding eigenvalues $\lambda$ of the squeezing matrix $\Gamma$ are shown for each mode. The squeezing eigenvalues are given by $|\lambda|$ and have multiplicity 2, as prescribed by Theorem~\ref{teo}. Each modal function represents a concatenation of two Schmidt modal functions localized in the idler and the signal frequency bands.}
\label{fig:modes}
\end{figure}

The corresponding modal functions of the squeezing eigenmodes are shown in Fig.~\ref{fig:modes}. Modal functions corresponding to the positive eigenvalues of $\Gamma$ are real. Modal functions corresponding to its negative eigenvalues are purely imaginary and their imaginary part is plotted. The structure of the modal functions corresponds to that prescribed by Eq.~(\ref{VTB2}): they are concatenations of Schmidt modal functions for the signal and the idler beams.

The effective number of squeezing eigenmodes can be estimated by a measure similar to the Schmidt number, widely used in the photon-pair generation regime \cite{Law04,Horoshko12,Gatti12}:
\begin{equation}\label{Schmidt}
   K_S = \frac{\left(\sum_{k=1}^{\infty}r_k\right)^2}{\sum_{k=1}^{\infty}r_k^2} = 2\frac{1+q_{num}}{1-q_{num}} \approx 34,
\end{equation}
where we have used the approximate expression for the squeezing eigenvalues, Eq.~(\ref{rk}). We see that as $q_{num}$ approaches unity the effective number of squeezing eigenmodes tends to infinity.

\subsection{Nondegenerate phase matching: analytic Gaussian modeling}

As has been pointed out before, the problem of finding the squeezing eigenvalues and the squeezing eigenmodes can be approached by the SVD of the JSA matrix $J$, rather than by the Takagi factorization of the squeezing matrix $\Gamma$. In this section we follow this approach. The JSA matrix, defined by Eq.~(\ref{HTB}), is given by the top right $m\times m$ block of the matrix $\mathbb{H}_I^{[1]}$, determined by Eq.~(\ref{HIm}). We set $z_0=L/2$, as in the previous section, but here we consider a pump pulse transform-limited at the crystal input face, which is a typical experimental situation, so that at the crystal center it acquires a frequency-dependent phase determined by Eq.~(\ref{E0}).

One advantage of the JSA approach consists in the economy of computational resources: the JSA matrix can be limited only to the spectral regions where the signal and the idler amplitudes are nonnegligible, while the squeezing matrix is mainly filled by zeros in the case of well-separated twin beams. Another advantage is the possibility to find the squeezing eigenvalues and the squeezing eigenmodes approximately by replacing the JSA with a double-Gaussian function and applying the well-known spectral decomposition of a real symmetric double-Gaussian kernel \cite{Grice01,Wasilewski06,Lvovsky07,Patera10,Horoshko12}
\begin{equation}\label{Mehler}
\frac1{\sqrt{\pi}}e^{-\frac{1+q^2}{2(1-q^2)}\left(x^2+y^2\right)+\frac{2q}{1-q^2}xy}
= \sum_{k=0}^\infty pq^k h_k(x) h_k(y),
\end{equation}
where $-1<q<1$, $p=\sqrt{1-q^2}$ and $h_k(x) = \left(2^kk!\sqrt{\pi}\right)^{-\frac12}H_k(x)e^{-x^2/2}$ is the Hermite-Gauss function, $H_k(x)$ being the Hermite polynomial. Equation~(\ref{Mehler}) is obtained by multiplying both sides of the Mehler's formula for Hermite polynomials \cite{Mehler1866} by $e^{-x^2/2-y^2/2}$. Since the Hermite-Gauss functions are orthonormal and complete on the Hilbert space, they represent eigenfunctions of the double-Gaussian kernel, while $pq^k$ are the corresponding eigenvalues.

Though Eq.~(\ref{Mehler}) proved to be highly efficient for finding the eigenmodes and the eigenvalues of a degenerate PDC, in the nondegenerate case one needs an SVD for a complex double-Gaussian kernel, in general not symmetric. Such a decomposition can be obtained in the following form (see proof in Appendix):
\begin{eqnarray}\label{ComplexMehler}
\frac1{\sqrt{\pi}}&&e^{-\frac12\left(\mu x^2+\nu y^2\right)+(\eta+i\xi)xy} = \sqrt{\frac{\tau_1\tau_2}v}\\\nonumber
&&\times \sum_{k=0}^\infty pq^k h_k(\tau_1x) h_k(\tau_2y)e^{i\zeta\tau_1^2x^2 +i\zeta\tau_2^2y^2+i\theta_k},
\end{eqnarray}
where $\mu,\nu,\eta$ and $\xi$ are real numbers, satisfying the relations $\mu,\nu>0$, $\sqrt{\mu\nu}>|\eta|$, necessary and sufficient for this kernel to be square-integrable. The parameters in the right-hand side of Eq.~(\ref{ComplexMehler}) are $\tau_1=\sqrt{uv/\nu}$, $\tau_2 = \sqrt{uv/\mu}$, $\zeta=\eta\xi/(2uv)$, $p=\sqrt{1-q^2}$, and
\begin{eqnarray}\label{parametersM}
q &=& \sqrt{\frac{u-v}{u+v}},
\end{eqnarray}
where $u=\sqrt{\mu\nu+\xi^2}$ and $v=\sqrt{\mu\nu-\eta^2}$. The phase $\theta_k$ is determined in Appendix. Note that the factor $v^{-\frac12}$ in the right-hand side of Eq.~(\ref{ComplexMehler}) is the Hilbert-Schmidt norm of the kernel in its left-hand side. In Eq.~(\ref{Mehler}) this norm is equal to 1. Another difference in the structure of two decompositions is the sign of the parameter $q$. In Eq.~(\ref{Mehler}) this parameter can be negative, while in Eq.~(\ref{ComplexMehler}) it is always positive and is complemented by the phase $\theta_k$. In the limiting case $\xi=0$, $\mu=\nu=\sqrt{1+\eta^2}$, $\theta_k=\pi k (1-\eta/|\eta|)/2$, Eq.~(\ref{ComplexMehler}) reproduces Eq.~(\ref{Mehler}).

Equation~(\ref{ComplexMehler}) shows that the left-singular functions of a complex double-Gaussian kernel are chirped Hermite-Gauss functions $\psi_k(x)=\sqrt{\tau_1}h_k(\tau_1x)e^{i\zeta\tau_1^2x^2+i\theta_k/2}$. It is straightforward to verify that these functions create a complete orthonormal set on the Hilbert space. The right-singular functions of this kernel are obtained from the left-singular ones by repacing $\tau_1$ by $\tau_2$.

In this section for the sake of analytic treatment we write the JSA function as a kernel depending on two continuous frequencies $\Omega_1$ and $\Omega_2$, but similar approach applies to the matrices obtained by passing to discrete frequency modes. To write the JSA in a double-Gaussian form, we make four following approximations.

The first approximation is the quadratic approximation for the dispersion law, which is a good approximation for a not too broadband PDC \cite{Caspani10,Horoshko12,Gatti12}. It has been applied to the pump in Eq.~(\ref{kp}). Applying it to the downconverted light results in limiting the Taylor series of the phase mismatch function $\Delta(\Omega_1,\Omega_2) =k_p(\Omega_1+\Omega_2)-k(\Omega_1)-k(\Omega_2)$ to the terms up to the second order in both frequencies:
\begin{eqnarray}\label{Delta}
\Delta(\Omega_1,\Omega_2) &\approx& \Delta_0 + (k_{p0}'-k_0')\Omega_+ \\\nonumber
&+& \frac14 (2k_{p0}''-k_0'')\Omega_+^2 - \frac14 k_0'' \Omega_-^2,
\end{eqnarray}
where $\Delta_0=k_{p0}-2k_0$ and we have introduced new variables $\Omega_\pm=\Omega_1\pm\Omega_2$. Here we denote by $k_{0}$, $k_{0}'$ and $k_{0}''$ the value and the two derivatives of $k(\Omega)$ at $\Omega=0$, as we did for the pump in Sec.~\ref{eig}.

The second approximation consists in disregarding the third term in the right hand side of Eq.~(\ref{Delta}) with respect to the second one, which can be done if $|\Omega_+|\ll4|k_{p0}'-k_0'|/|2k_{p0}''-k_0''|$. For the considered example this means $|\Omega_+|\ll 1.04\omega_0$, while $|\Omega_+|$ is limited by the pump bandwidth $\Omega_p=2\sqrt{\log2}/\tau_p\approx0.0091\omega_0$, which means that this approximation is well justified. In this approximation the line in the frequency space corresponding to the perfect phase matching is represented by a parabola
\begin{eqnarray}\label{parabola}
\Delta_0 + (k_{p0}'-k_0')\Omega_+ - \frac14 k_0'' \Omega_-^2 = 0.
\end{eqnarray}
This parabola, corresponding to the middle of the phase-matched area in Fig.~\ref{fig:jsa}, intersects the $\Omega_+=0$ line at two points $\Omega_-=\pm2\Omega_s$, where $\Omega_s=\sqrt{\Delta_0/k_0''}$ is the central detuning of the signal beam, $-\Omega_s$ being the central detuning of the idler one. In the vicinity of the point $\Omega_+=0$, $\Omega_-=2\Omega_s$ we introduce new variables $\delta\Omega_+=\Omega_+$, $\delta\Omega_-=\Omega_--2\Omega_s$, corresponding to small deflections from the frequencies where the signal-idler coupling is maximal. Further, we disregard $\delta\Omega_-^2$ compared to $\Omega_s\delta\Omega_-$, which is the third approximation we make. This approximation is always valid for twin beams well-separated in frequency, when the bandwidth of one beam is much less than the frequency distance between the central frequencies of the beams.

In these three approximations the JSA function is obtained from Eqs.~(\ref{HIm}), (\ref{E0}) and (\ref{kp}) as
\begin{eqnarray}\nonumber
 J(\Omega_1,\Omega_2)=&g&\exp\left(-\frac{\delta\Omega_+^2}{2\Omega_p^2}\right) \sinc\left(\frac{\tau_d\delta\Omega_+-\tau_s\delta\Omega_-}2\right)\\\label{J}
 &\times& e^{i\tau_{pd}\delta\Omega_+/2 + i\tau_{ps}^2\delta\Omega_+^2/4},
\end{eqnarray}
where $g=\sigma LE_0e^{ik_{p0}L/2}$ is the coupling constant, $E_0$ being the peak pump amplitude, and we have introduced  four characteristic times: the absolute group delay time of the pump $\tau_{pd}=k_{p0}'L$, the pump spread time $\tau_{ps}=\sqrt{k_{p0}''L}$, the relative pump-signal group delay time $\tau_d=(k_{p0}'-k_0')L$, and $\tau_s=\sqrt{k_0''\Delta_0}L$. The latter time is proportional to the spread time of the signal pulse during its passage through the crystal $\sqrt{k_{0}''L}$ \cite{Horoshko12,Gatti12}, but also depends on the phase-mismatch at degeneracy $\Delta_0$. It is straightforward to find that in the vicinity of the point $\Omega_+=0$, $\Omega_-=-2\Omega_s$ the signal-idler coupling is determined by the function $J(\Omega_2,\Omega_1)$, which is the transposition of the JSA defined by Eq.~(\ref{J}).

The fourth and the final approximation consists in making a replacement $\sinc(x)\approx e^{-x^2/(2\sigma_0^2)}$, where $\sigma_0=1.61$ is chosen so that these two functions have the same width at half-maximum \cite{Grice01,Wasilewski06,Lvovsky07,Patera10,Horoshko12}. As a result, JSA in the variables $\delta\Omega_1=\Omega_1-\Omega_s$, $\delta\Omega_2=\Omega_2+\Omega_s$ takes the form $J(\Omega_1,\Omega_2)=g\tilde J(\Omega_1,\Omega_2)e^{i\tau_{pd}\left(\delta\Omega_1+\delta\Omega_2\right)/2 + i\tau_{ps}^2\left(\delta\Omega_1^2+ \delta\Omega_2^2\right)/4}$, where
\begin{eqnarray}\label{J2}
 \tilde J(\Omega_1,\Omega_2) &=&\exp\left(-\frac{(\delta\Omega_1+\delta\Omega_2)^2}{2\Omega_p^2}+\frac{i}2\tau_{ps}^2\delta\Omega_1\delta\Omega_2\right)\\\nonumber &\times&\exp\left(-\frac{[(\tau_d-\tau_s)\delta\Omega_1+(\tau_d+\tau_s)\delta\Omega_2]^2} {8\sigma_0^2}\right)
\end{eqnarray}
is the part of the kernel with a non-separable phase. This function is shown (in its absolute value) in Fig.~\ref{fig:jsaGauss}, where it is compared to the exact one.
\begin{figure}[!ht]
\centering
\includegraphics[width=0.95\linewidth]{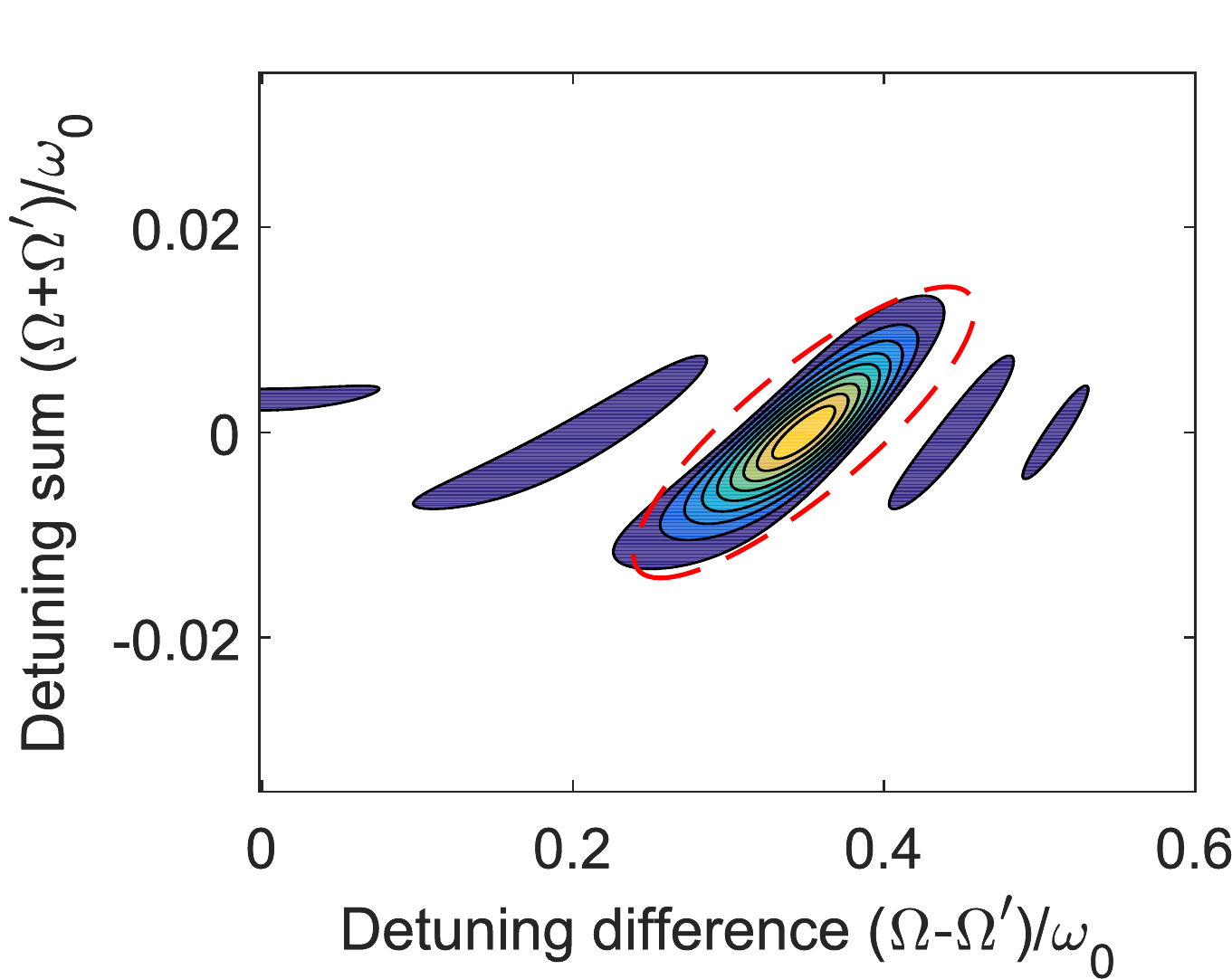}
\caption{JSA as a function of the detunings $\Omega$ and $\Omega'$ (contour map) and its double-Gaussian approximation (red dashed line).}
\label{fig:jsaGauss}
\end{figure}

Comparing Eq.~(\ref{J2}) to Eq.~(\ref{ComplexMehler}) and accepting for simplicity that $\arg(-ig)=0$, we find the modal functions of the signal and the idler Schmidt modes:
\begin{eqnarray}\label{C}
 C_k(\Omega_1) &=& \sqrt{\tau_1}h_k\left(\tau_1\delta\Omega_1\right) e^{i\tau_{pd}\delta\Omega_1+i\zeta_1\tau_1^2\delta\Omega_1^2},\\\nonumber D_k(\Omega_2) &=& \sqrt{\tau_2}h_k\left(\tau_2\delta\Omega_2\right) e^{-i\tau_{pd}\delta\Omega_2-i\zeta_2\tau_2^2\delta\Omega_2^2},
\end{eqnarray}
where the parameters $\tau_1$ and $\tau_2$ are defined as in Eq.~(\ref{ComplexMehler}) with $\xi=\tau_{ps}^2/2$ and
\begin{eqnarray}\nonumber
\mu&=&\frac1{\Omega_p^2}+\frac{(\tau_d-\tau_s)^2}{4\sigma_0^2},
\\\label{physparam}
\nu&=&\frac1{\Omega_p^2}+\frac{(\tau_d+\tau_s)^2}{4\sigma_0^2},
\\\nonumber
\eta&=&-\frac1{\Omega_p^2}-\frac{\tau_d^2-\tau_s^2}{4\sigma_0^2},
\end{eqnarray}
while the signal and the idler chirp rates are $\zeta_1=\xi(\nu+\eta)/(2uv)$ and $\zeta_2=\xi(\mu+\eta)/(2uv)$.

For the considered example $\tau_1=48$ fs, $\tau_2=60$ fs, $\zeta_1=0.0086$, $\zeta_2=-0.0063$, and $q=0.8681$. The modal function of the signal Schmidt mode $C_1(\Omega_1)$ is shown in Fig.~\ref{fig:Schmidt} together with the modal function obtained by a numerical SVD of JSA for two different crystal lengths. We see that the analytic solution follows closely the numerical one for a shorter crystal, with slight differences for a longer one.

\begin{figure}[!ht]
\centering
\includegraphics[width=0.95\linewidth]{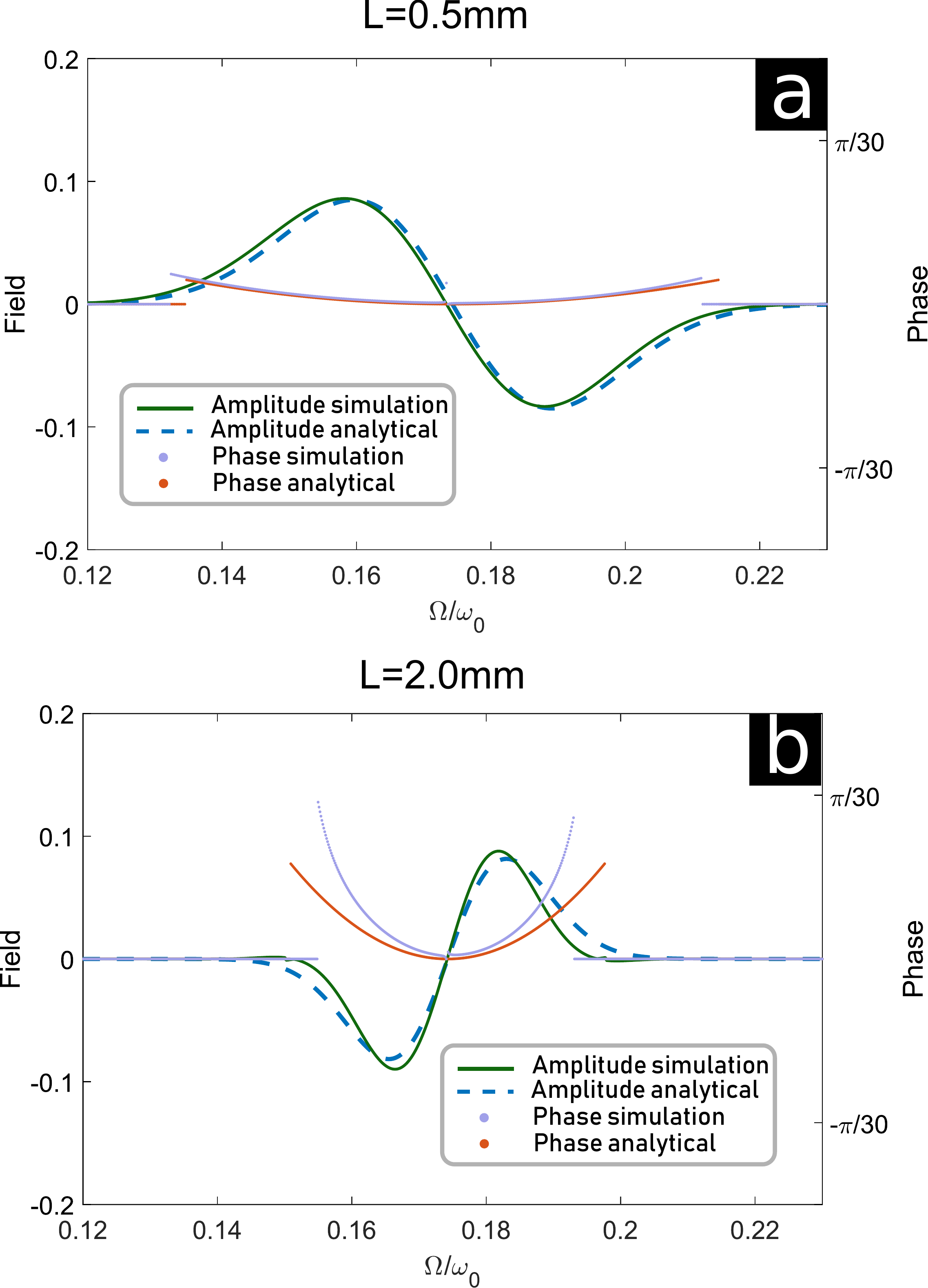}
\caption{Amplitude and phase of the modal function of the (properly delayed) signal Schmidt mode $C_1(\Omega_1)=h_1\left(\tau_1\delta\Omega_1\right) e^{i\zeta_1\tau_1^2\delta\Omega_1^2}$ obtained by Gaussian modeling and by numerical SVD of JSA for two BBO crystal lengths: $L=0.5$ mm in (a) and $L=2.0$ mm in (b). A longer crystal corresponds to a narrower bandwidth. The chirp rate increases with length but for the considered example (BBO pumped at 397.5 nm) is negligible at any crystal length.}
\label{fig:Schmidt}
\end{figure}

We note also, that the variation of the phase of the modal function due to its chirp within the signal frequency band is much less than $\pi$ and therefore negligible for the considered example. This fact justifies the approach of the previous section, where the pump chirp was disregarded from the beginning.

It is interesting to analyze the influence of the crystal length $L$ on the shape of the Schmidt modes. As typical for PDC, growing crystal length results in a more restrictive phase matching and to narrower signal and idler bandwidths. This is clearly seen in Fig.~\ref{fig:Schmidt}. The absolute value of the chirp increases for a longer crystal, however, the maximal variation of the phase remains small because of decreased bandwidth. The dimensionless chirp parameters $\zeta_1$ and $\zeta_2$ allow us to estimate the importance of considering the complex squeezing matrix instead of a real one. In the considered example they are much less that 1. Asymptotically with growing $L$ they tend to length-independent constants, determined by the dispersive properties of the crystal material only, as can be seen from  their definitions and Eq.~(\ref{physparam}).

We observe that the idler characteristic time $\tau_2$ is 25\% higher than the signal characteristic time $\tau_1$. This explains the asymmetry of the signal and idler Schmidt modes in Fig.~\ref{fig:modes}. The eigenvalues shown in Fig.~\ref{fig:values} correspond to the geometric progression $q_{num}^n$, with $q_{num}=0.8903$, which is slightly higher than the value of $q$ given above by the Gaussian modeling. This and other slight differences between the numerical and the analytical solutions are caused by some of the four approximations we made.

The treatment of this section demonstrates the effectiveness of Gaussian modeling for the Schmidt modes, based on the complex Mehler's formula.

\subsection{Twin beams close to degeneracy}
To illustrate the disappearance of the features peculiar to twin beams, now we consider a different angle between the crystal optical axis and the pump wave-vector of $\theta_0=\ang{29.18}$, for which the PDC is close to be frequency degenerate. The squeezing matrix is shown in Fig.~\ref{fig:jsa2}. In this case the signal and the idler areas are not well separated and we expect deflections from the theory developed above.
\begin{figure}[!ht]
\centering
\includegraphics[width=0.99\linewidth]{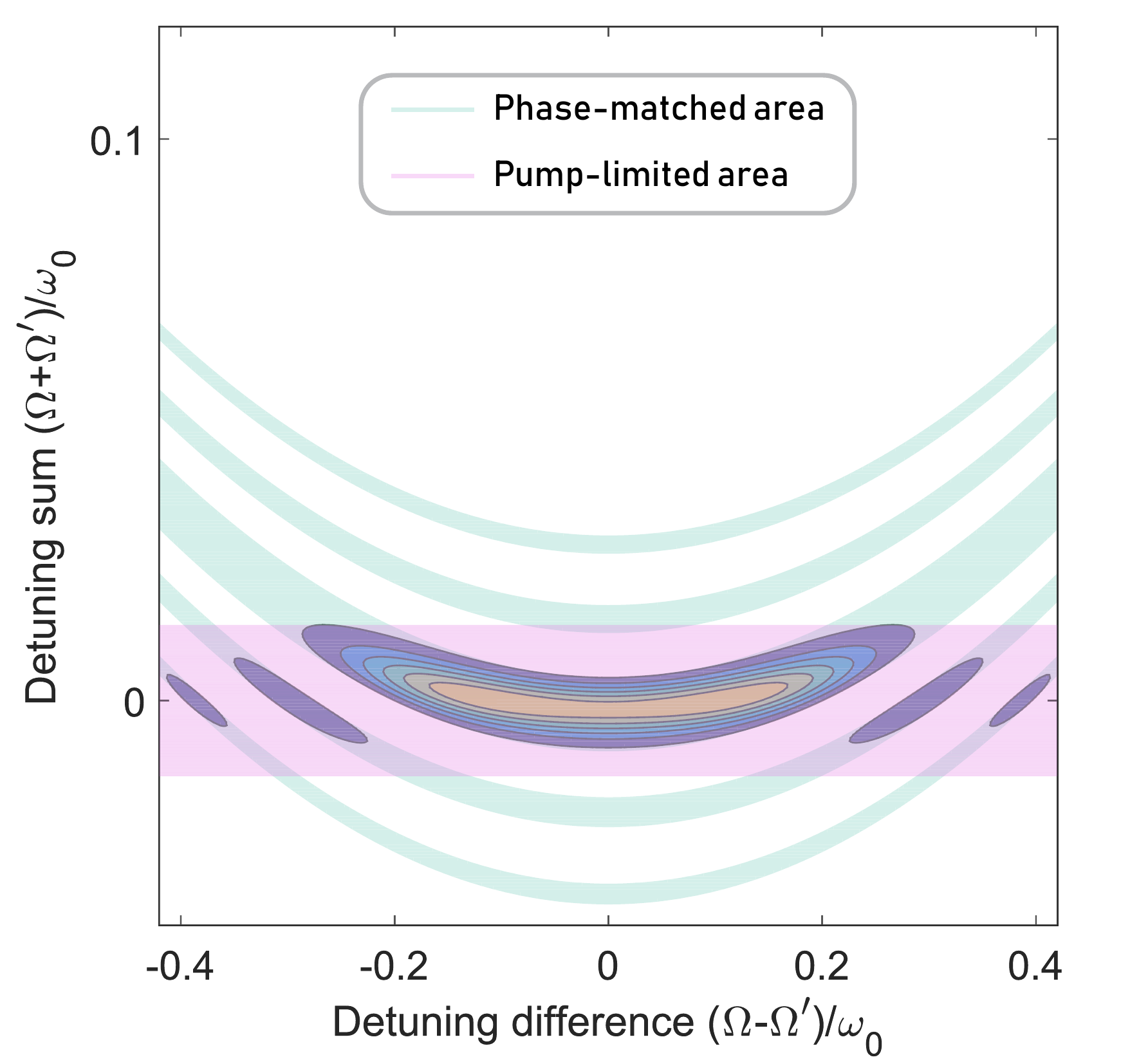}
\caption{The squeezing matrix $\Gamma$ as function of the signal and idler detunings $\Omega$ and $\Omega'$ respectively for type-I PDC close to degeneracy. The signal and the idler bands are not separated in this case.}
\label{fig:jsa2}
\end{figure}

The squeezing eigenvalues are shown in Fig.~\ref{fig:values2}. Only the four first eigenvalues show multiplicity 2, predicted by Theorem~\ref{teo}.
\begin{figure}[!ht]
\centering
\includegraphics[width=0.99\linewidth]{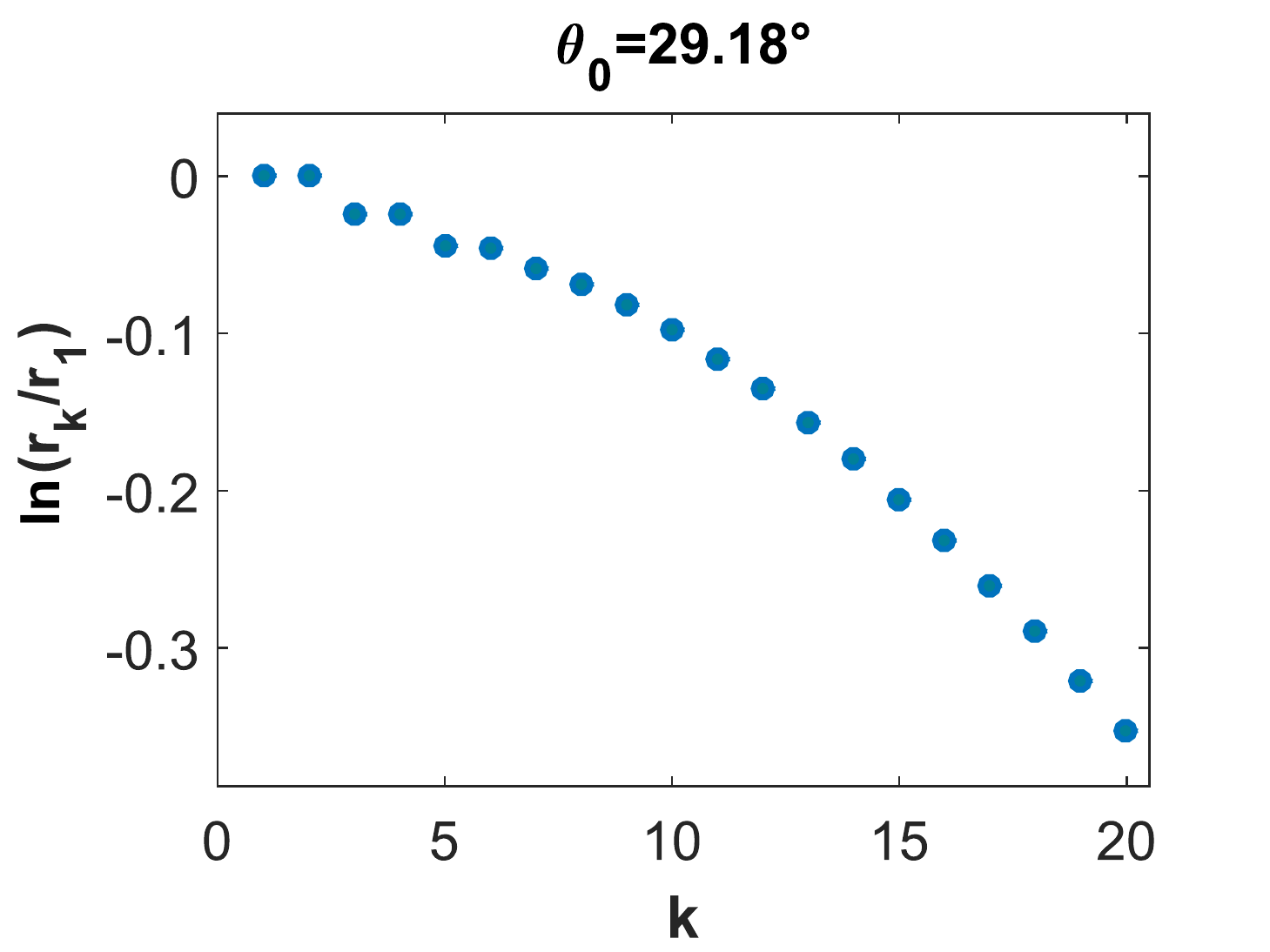}
\caption{Squeezing eigenvalues of twin beams generated in PDC close to degeneracy in frequency. Only four first eigenvalues create pairs, as required by Theorem~\ref{teo}. At higher mode number the emission of two signal or two idler photons in one elementary act becomes significant and the Theorem~\ref{teo} is not valid any more. }
\label{fig:values2}
\end{figure}

The modal functions of the corresponding squeezing eigenmodes are shown in Fig.~\ref{fig:modes2}. Only the four first modal functions resemble concatenations of local Hermite-Gauss modes. At higher mode numbers the deflection of the transformation generator matrix  from the form of Eq.(\ref{HTB}) becomes significant and Theorem~\ref{teo} loses its validity. Physically it means that the emission of two signal or two idler photons in one elementary act of photon-pair creation becomes significant as we approach the degenerate regime by varying the phase-matching conditions. We conjecture that the eigenvalues cease to be multiple at the mode number $k$ for which the Schmidt modal functions calculated separately for the signal and the idler beams start to overlap in the area around the origin. For the considered case it is $k=5$ and $k=6$, see the two bottom plots in Fig.~\ref{fig:modes2}, where the signal and the idler parts of the modal function start to deflect from the shape of the Hermite-Gauss function of the corresponding (second) order.
\begin{figure}[!ht]
\centering
\includegraphics[width=\linewidth]{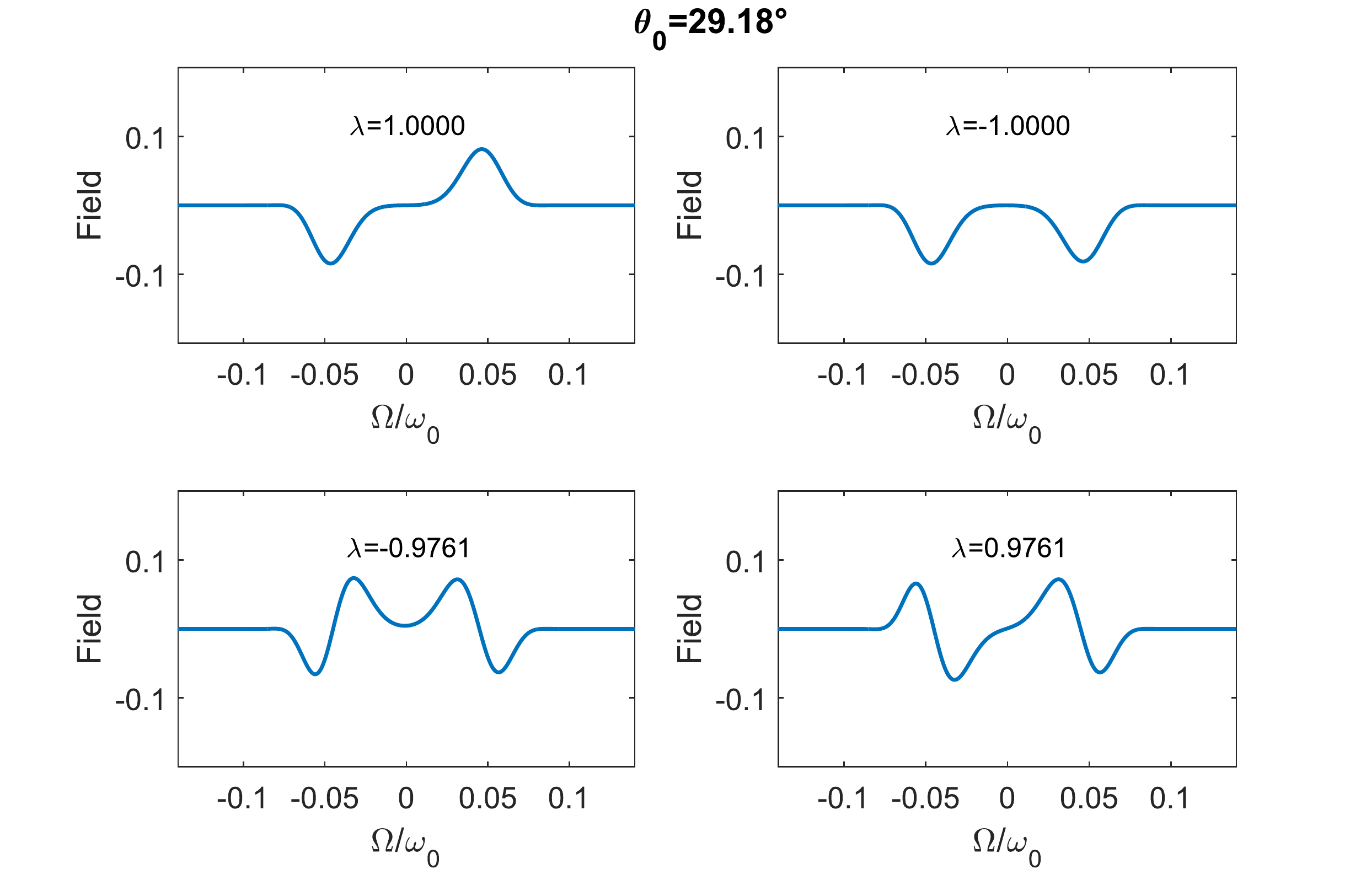}
\includegraphics[width=\linewidth]{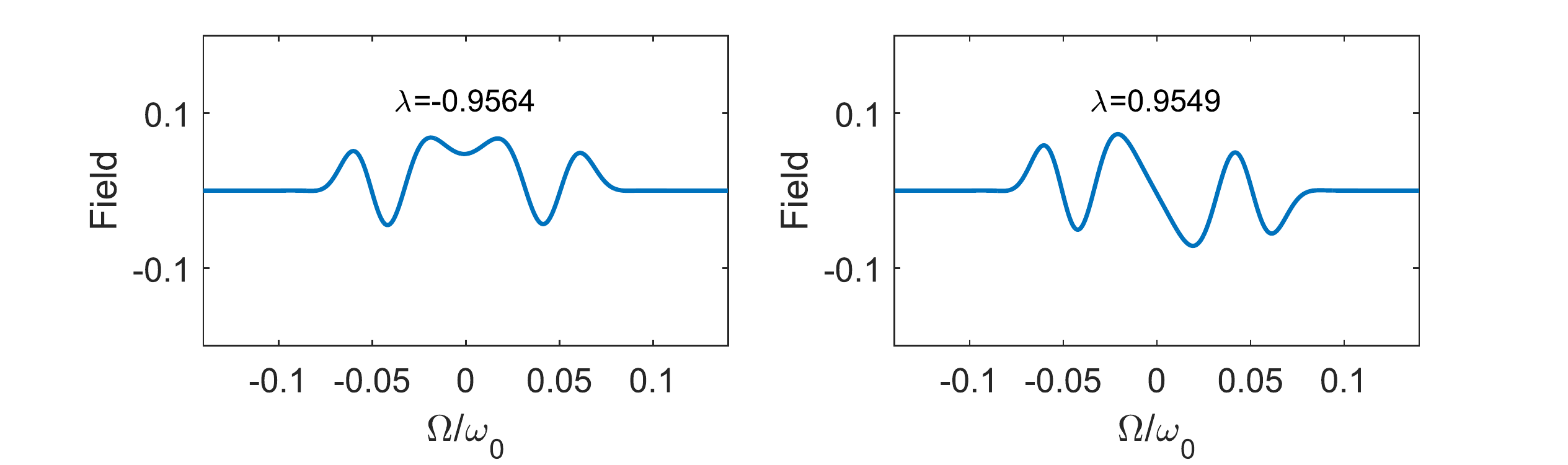}
\caption{Modal functions of the squeezing eigenmodes of twin beams generated in PDC close to degeneracy in frequency. The corresponding eigenvalues of the squeezing matrix $\Gamma$ are shown for each mode. The four first modal functions can be considered as concatenations of the Hermite-Gauss functions in the signal and the idler bands. However, the two last modal functions deflect from this rule around the origin. For these and higher modes the twin beam analysis is not valid any more.}
\label{fig:modes2}
\end{figure}

\section{Conclusions}\label{conclusions}
We have applied the formalism of Bloch-Messiah reduction to the parametric downconversion of light in the case where pulsed twin beams are generated. We have shown how the squeezing eigenvalues and the squeezing eigenmodes can be obtained in the case of moderate squeezing, where the solution of the wave equation is obtained in the first order of Magnus expansion. For this case we have proven a fundamental result: all the squeezing eigenvalues of twin beams have multiplicity at least two. As consequence, the modal functions of the squeezing eigenmodes are not unique and defined up to an orthogonal rotation in the space of two eigenmodes related to the same eigenvalue. We established two methods for avoiding the ambiguity in the definition of the squeezing eigenmodes: (i) reducing the Takagi factorization of the squeezing matrix to the spectral decomposition of an associated Hermitian matrix and (ii) tailoring the squeezing eigenmodes from the Schmidt modes of the signal and the idler beams obtained by SVD of the JSA matrix.

These general results have been illustrated by an example of twin beams discriminated by frequency. For the case of good separation of the signal and the idler spectra we have found the multiplicity two of all eigenvalues, predicted by the general theory. We have also shown that the Schmidt modal functions of the signal and the idler beams can be modeled with very good precision by replacing the JSA by a complex double-Gaussian function. The modal functions of the squeezing eigenmodes in this case are two chirped Hermite-Gauss functions in the signal and the idler frequency bands.

There are several important extensions of the present work. The considered example corresponds to rather narrowband twin beams discriminated by frequency. It would be interesting to apply the developed formalism to ultrabroadband fields with almost constant spectrum, for example, those generated in aperiodically poled crystals \cite{Horoshko13,Horoshko17,Perina18,Chekhova18}. In the narrowband regime a full treatment of 3D spatio-temporal modes is possible in the framework of complex analytic modeling developed here. It should reveal correlations between the spatial and temporal degrees of freedom of entangled twin beams. Another important direction of future research is related to twin beams discriminated by polarization or wave-vector direction. In the latter case two beams can be frequency-degenerate and their correlations can be converted to single-mode squeezing by means of interference on a beam splitter. This approach has high potential for producing highly multimode pulsed cluster states in a form of entangled frequency combs, similar to continuous-wave cluster states successfully produced in the last years \cite{Yokoyama13}.

\section*{Acknowledgments}

D.B.H and M.I.K. thank G. Patera and T. Lipfert for helpful comments. F.A. is grateful to F. Corsi for useful discussions. This work was supported by the European Union's Horizon 2020 research and innovation programme under grant agreement No 665148 (QCUMbER). F.A. acknowledges financial support from the French National Research Agency Project No ANR-17-CE24-0035 VanQuTe.

\appendix*
\section{Proof of the complex Mehler's formula}

Here we prove a relation, which is equivalent to Eq.~(\ref{ComplexMehler}) and is obtained from it by letting $x'=\tau_1x$, $y'=\tau_2y$, $\eta'=\eta/\sqrt{\mu\nu}$, $\xi'=\xi/\sqrt{\mu\nu}$. Omitting the primes for simplicity we obtain the following identity
\begin{eqnarray}\label{ComplexMehler2}
&&\frac1{\sqrt{\pi}}e^{-\frac1{2w}\left(x^2+y^2\right)+\frac1w(\eta+i\xi)xy} \\\nonumber
&&= \sqrt[4]{(1+\xi^2)} pe^{i\theta_0}\sum_{k=0}^\infty \left(q e^{i\theta}\right)^k h_k(x) h_k(y)e^{i\zeta\left(x^2+y^2\right)},
\end{eqnarray}
which can be viewed as the Takagi factorization of a complex \emph{symmetric} double-Gaussian kernel. Here $w=\sqrt{(1+\xi^2)(1-\eta^2)}$, $\zeta=\eta\xi/(2w)$, $p=\sqrt{1-q^2}$, $q=\sqrt{(1-q_0)/(1+q_0)}$, where $q_0=\sqrt{(1-\eta^2)/(1+\xi^2)}$. The angles $\theta$ and $\theta_0$ are found below. Note that the factor $\sqrt[4]{(1+\xi^2)}$ in the right-hand side of Eq.~(\ref{ComplexMehler2}) is the Hilbert-Schmidt norm of the kernel in its left-hand side.

Let us denote the kernel in the left-hand side of Eq.~(\ref{ComplexMehler2}) by $K(x,y)$. This kernel corresponds to some integral operator $\mathcal{K}$. The standard method of finding the singular functions of this operator consists in solving the eigenvalue problems for the Hermitian operators $\mathcal{K}\mathcal{K}^\dagger$ and $\mathcal{K}^\dagger\mathcal{K}$. However, the phases of the singular functions cannot be determined in this way, since the eigenfunction is defined up to a unitary rotation (phase in one dimension), while the singular function in a Takagi factorization is defined up to an orthogonal rotation (sign in one dimension). For a full Takagi factorization we apply a more powerful method of generating function.

The generating function for the Hermite polynomials reads
\begin{eqnarray}\label{Generating}
&&\Phi_0(x,t)=e^{2xt-t^2} = \sum_{k=0}^\infty H_k(x) \frac{t^k}{k!}.
\end{eqnarray}

Now consider the following function
\begin{eqnarray}\label{Generating2}
&&\Phi(x,t)=\int_{-\infty}^{\infty}K(x,y)\Phi_0(y,t)e^{-y^2/2-i\zeta y^2}dy,
\end{eqnarray}
where the last factor under the integral is the complement of the Hermite polynomial to the singular function in Eq.~(\ref{ComplexMehler2}) (up to normalization). Differentiating the integrand $k$ times we obtain
\begin{eqnarray}\label{GeneratingDiff}
\left.\frac{\partial^k\Phi(x,t)}{\partial t^k}\right|_{t=0} = N_k\int_{-\infty}^{\infty}K(x,y)h_k(y)e^{-i\zeta y^2}dy,\\\nonumber
\end{eqnarray}
where $N_k=\sqrt{2^kk!\sqrt{\pi}}$ is the normalization constant of the Hermite-Gauss function.

On the other hand, Eq.~(\ref{Generating2}) represents an integral of a complex Gaussian function of $y$, which can be taken analytically:
\begin{eqnarray}\label{Generating3}
&&\Phi(x,t)=p_c
e^{-x^2/2+i\zeta x^2} e^{2q_cxt-q_c^2t^2},
\end{eqnarray}
where the complex numbers $q_c$ and $p_c$ are defined as
\begin{eqnarray}\label{qc}
q_c = \frac{\eta+i\xi}{1+w+i\eta\xi},\\\label{pc}
p_c = \sqrt{\frac{2w}{1+w+i\eta\xi}}.
\end{eqnarray}
Note that $p_c^2+q_c^2=1$. The last factor in the right-hand side of Eq.~(\ref{Generating3}) is exactly $\Phi_0(x,q_ct)$. Differentiating it $k$ times we obtain
\begin{eqnarray}\label{GeneratingDiff2}
&&\left.\frac{\partial^k\Phi(x,t)}{\partial t^k}\right|_{t=0} =N_kp_cq_c^kh_k(x)e^{i\zeta x^2}.
\end{eqnarray}

Comparing Eq.~(\ref{GeneratingDiff}) with Eq.~(\ref{GeneratingDiff2}) we conclude that the function $\Psi_k(y)=h_k(y)e^{i\zeta y^2+i(\theta_0+k\theta)/2}$, where $\theta_0=\arg(p_c)$ and $\theta=\arg(q_c)$, is the right-singular function of the kernel $K(x,y)$ with the singular value $|p_cq_c^k|$. It is not hard to find that $|q_c|=q$ and $|p_c|=p\sqrt[4]{(1+\xi^2)}$. Since the considered kernel is symmetric, its left-singular function is $\Psi_k(x)$, which concludes the proof.

\bibliography{Twin-beams-BM2} 

\end{document}